\definecolor{darkblue}{rgb}{0.1,0.2,0.6}
\definecolor{darkred}{rgb}{0.8,0.1,0.2}
\definecolor{darkgreen}{rgb}{0.31,0.62,0.24}
\definecolor{bleudefrance}{rgb}{0.19, 0.55, 0.91}
\newtheoremstyle{italicroman}
  {3pt}   % Space above theorem
  {3pt}   % Space below theorem
  {\normalfont} % Body font (Roman)
  {}      % Indent amount
  {\itshape} % Theorem head font (italic)
  {.}     % Punctuation after theorem head
  { }     % Space after theorem head
  {}      % Theorem head spec
\theoremstyle{italicroman}
\newtheorem{theorem}{Theorem}[section]
\newtheorem{lemma}[theorem]{Lemma}
\newtheorem{definition}[theorem]{Definition}
\newtheorem{proposition}[theorem]{Proposition}
\newtheorem{corollary}[theorem]{Corollary}
\newtheorem{problem}[theorem]{Problem}
\date{\today}
\newcommand{\<}{\langle}
\renewcommand{\>}{\rangle}
\renewcommand{\ALG@name}{Box} %Change the name Algorithm to Algoritme
\DeclareMathOperator{\poly}{poly}
\begin{document}
\title{Observation of a non-Hermitian supersonic mode on a trapped-ion quantum computer}
    \author{Yuxuan Zhang\orcidlink{0000-0001-5477-8924}}
    \thanks{quantum.zhang@utoronto.ca}
    
    \affiliation{Department of Physics, University of Toronto, Toronto, ON, Canada}
    \author{Juan Carrasquilla\orcidlink{0000-0001-7263-3462}}

    \affiliation{Institute for Theoretical Physics, ETH Zürich, Zürich, Switzerland}
    \author{Yong Baek Kim}
     \affiliation{Department of Physics, University of Toronto, Toronto, ON, Canada}

\begin{abstract}
Quantum computers have long been anticipated to excel in simulating quantum many-body physics. In this work, we demonstrate the power of variational quantum circuits for resource-efficient simulations of dynamical and equilibrium physics in non-Hermitian systems.
Using a variational quantum compilation scheme for fermionic systems, we reduce gate count, save qubits, and eliminate the need for postselection, a major challenge in simulating non-Hermitian dynamics via standard Trotterization. On the Quantinuum H1 trapped-ion processor, we experimentally observed a supersonic mode on an $ n = 18 $ fermionic chain after a non-Hermitian, nearest-neighbor interacting quench, which would otherwise be forbidden in a Hermitian system. Additionally, we investigate sequential quantum circuits generated by tensor networks for ground-state preparation using a variance minimization scheme, accurately capturing correlation functions and energies across an exceptional point on a dissipative spin chain up to length $ n = 20 $ using only 3 qubits. On the other hand, we provide an analytical example demonstrating that simulating single-qubit non-Hermitian dynamics for $\Theta(\log(n))$ time from certain initial states is exponentially hard on a quantum computer. Our work raises many intriguing questions about the intrinsic properties of non-Hermitian systems that permit efficient quantum simulation.
\end{abstract}
\maketitle
\section{Introduction}
In quantum mechanics, Hamiltonian Hermiticity is typically considered a fundamental postulate. However, this requirement can be relaxed in open systems, such as quantum hardware experiencing noise, atoms undergoing spontaneous decay, or other scenarios involving measurement and postselection---the process of monitoring a quantum system and analyzing its behavior following the quantum trajectory of specific measurement outcomes (see Fig.~\ref{fig:quench}a). This selective process enables an effective description of the system's dynamics using pure states evolving under a non-Hermitian Hamiltonian, providing new strategies for efficient numerical simulation and an alternative physical perspective for studying open quantum systems~\cite{dalibard1992wave}.

%{\color {blue}{YB: it may be useful to explain what you mean by "postselection". Nature physics aims at rather general audience.}} 
The study of non-Hermitian physics traces back to the complex field theory approach by Yang and Lee, who used it to explore new phases of matter~\cite{yang1952statistical}. Since then, it has been established that Hamiltonian non-Hermicity %unitary state evolution due to dissipation 
leads to unconventional phase transitions and unique phenomena~\cite{fisher1978yang,hatano1996localization,hatano1997vortex,ashida2020non}, such as exceptional points~\cite{bender1969anharmonic}, the non-Hermitian skin effect~\cite{yao2018edge}, supersonic modes~\cite{ashida2018full,dora2020quantum}, topological phases~\cite{rudner2009topological}, and unique entanglement behaviors~\cite{gopalakrishnan2021entanglement,kawabata2023entanglement}. Although these discoveries have generated new possibilities and excitement in the era of quantum information, non-Hermiticity poses an engineering challenge for conventional electrical, mechanical, photonic and cold atom platforms, and experimental demonstrations have been largely confined to few-particle problems~\cite{guo2009observation,ruter2010observation,feng2011nonreciprocal}.

%Since then, it has been known that the non-unitary state evolution due to dissipation leads to unconventional phase transitions and gives rise to unique behaviors such as exceptional points~\cite{bender1969anharmonic,berry1998diffraction,heiss2012physics}, skin effect~\cite{}, supersonic modes~\cite{} and entanglement phase transitions, creating new possibilities and excitement in the era of quantum computing~\cite{bender1998real,el2018non,yamamoto2019theory,hamazaki2019non,ashida2020non,guo2021entanglement,yamamoto2022universal,shen2023observation}.

The recent progress of programmable quantum computers is expected to offer new opportunities to solve computational tasks~\cite{shor1994algorithms}, enable cryptography~\cite{pirandola2020advances}, sample hard distributions~\cite{aaronson2011computational}, and, crucially, to simulate many-body physics~\cite{feynman2018simulating}. However, the presence of physical noise creates a significant obstacle in the faithful execution of such tasks. More essentially, the postselection nature of non-Hermitian dynamical simulations brings the challenge to another level: In a ``direct'' simulation by Trotterization, the success rate of postselection would vanish exponentially in time, creating a huge sampling overhead. To date, digital quantum simulations of non-Hermitian physics have also been limited in scale compared to those of Hermitian quantum systems~\cite{wen2019experimental,shen2023observation}.
 
On the other hand, variational quantum algorithms (VQAs)~\cite{cerezo2021variational} such as the quantum approximate optimization algorithm (QAOA)~\cite{farhi2014quantum} and the variational quantum eigensolver (VQE)~\cite{peruzzo2014variational} promise to offer advantages in the simulation of quantum systems with near-term quantum hardware~\cite{preskill2012quantum}. %Even though an analytical study could be hard to prove~\cite{}, 
Combined with machine learning~\cite{carrasquilla2017machine} and tensor-network techniques~\cite{schon2005sequential}, VQAs have already yielded successful applications in reducing the resource cost in quantum time evolution~\cite{lin2021real,zhang2024scalable} and quantum state preparation~\cite{foss2021holographic}. It is then natural to ask whether one could utilize these methods in the quantum simulation of non-Hermitian systems. Intuitively, VQAs explore the large degree of freedom in the Hilbert space and could mitigate the postselection issue, and even reduce the circuit gate count.

%It is worth noticing that upper-bound in the advantage one could get is very sensitive to the type of Hamiltonian and initial state. 
%We analytically prove a no-go theorem when even single qubit non-Hermitian time evolution on certain initial states for merely $t\sim\log(n)$ can be exponentially hard to simulate using a quantum circuit. 
In this work, we demonstrate the power of variational quantum algorithms through the analytical, numerical, and experimental study of various non-Hermitian systems. First, combining variational quantum compilation (VQC) and Gaussian matrix product state (GMPS) method, we simulate the non-Hermitan quench dynamics in a strongly correlated fermionic system~\cite{dora2020quantum} without post-selection, drastically reducing the quantum resources required compared to a Trotterization scheme. Most significantly, this includes a $\sim10^{19}$ sampling overhead due to post-selection, as we estimate. On Quantumtinuum's H1 trapped-ion quantum processor, we observe a non-Hermitian supersonic mode where correlation functions propagate faster than the conventional light cone. Second, by integrating the data compression capabilities of quantum matrix product states (qMPS)~\cite{schon2005sequential,perez2006matrix} with a variance minimization algorithm, we devise an eigenstate-finding algorithm for non-Hermitian systems suitable for near-term quantum computers. Testing our algorithm on Quantumtinuum's H1 trapped-ion quantum computer, we accurately capture the energy and correlation functions of a $n = 20$ non-Hermitian spin chain around an exceptional point (EP), where the real-valued ground energy merges with the first excited state to form a Hermitian conjugate pair. This result leverages the qMPS compression capabilities, wherein a classical MPS with bond dimension $\chi$ can be stored on a quantum computer using memory scaling as $q \sim \log\chi$~\cite{schon2005sequential,foss2021holographic}, extending the quantum memory advantages of qMPS to the study of non-Hermitian systems. 

\begin{figure*}
    \centering
    \includegraphics[width=.95\linewidth]{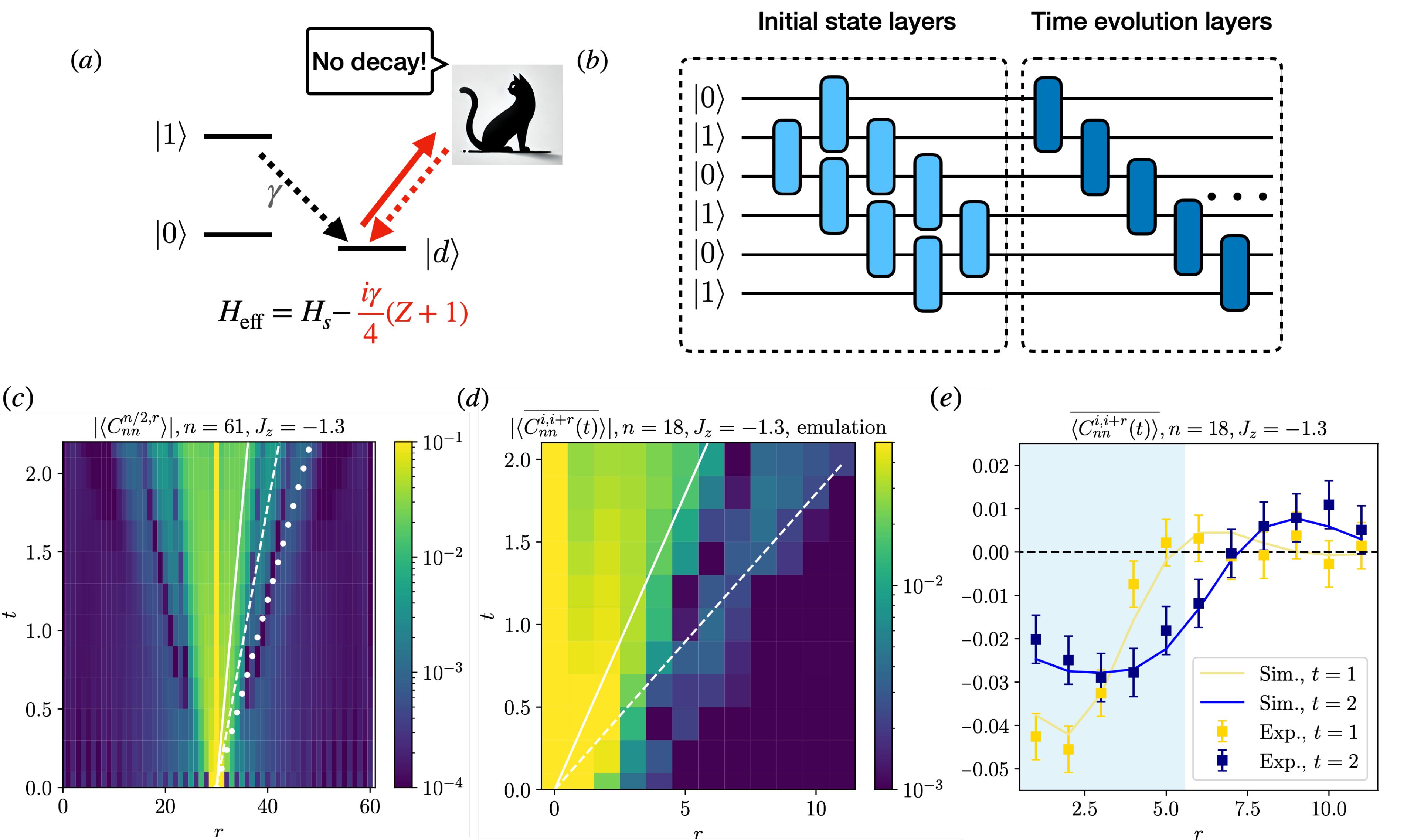}
    \caption{(a) A minimal example of a non-Hermitian system: Consider a two-level atom with eigenstates $\{\ket{0},\ket{1}\}$. When in the excited state $\ket{1}$, the atom spontaneously decays to a dark state $\ket{d}$ at a rate $\gamma$. A cat observer closely monitors the population in the dark state, and by postselecting on the absence of decay, the atom can be effectively thought of as undergoing non-Hermitian time evolution~\cite{dalibard1992wave}. 
    (b) The use of VQAs in simulating the non-Hermitian quantum quench of Eq.~\ref{eq: fermi} resolves the postselection issue: The half-filled free-fermion ground state is first prepared with a circuit generated by a Gaussian matrix product state. Additional layers of variational unitary circuits are then attached to compress the non-Hermitian time evolution.
    (c) Numerical observation of a supersonic mode in non-Hermitian quench dynamics. The different modes can be distinguished from the intensity `valleys' of the correlation function. The white solid line denotes the Lieb-Robinson bound and the dashed lines represent the first two supersonic modes.
    (d) For the same physical parameter as in (c) at $n = 18$, we emulate the compressed quench dynamics using 100,000 shots and report the connected density-density correlator averaged over bulk sites to enhance signal level. 
    (e) Experimental data of the same circuit in (d) collected using Quantinuum's H1 quantum computer at $t = 1$ and $t = 2$ for 5,000 shots each. We compare the averaged connected density-density correlator to a classical TEBD simulation. The error bars represent two standard deviations and the shaded blue region indicates the conventional lightcone at $r = 2vt$ where $t = 2$.} 
    
    \label{fig:quench}
\end{figure*}

\section{Results}
\subsection{Observation of a supersonic mode}~\label{sec:sm}
One of the most striking dynamical properties of non-Hermitian physics is the violation of the Lieb-Robinson (LR) bound~\cite{lieb1972finite}, a cornerstone of quantum dynamics that imposes strict limits on the propagation of information in many-body systems. Although LR upper bounds the information spreading velocity in a locally interacting Hermitian quantum system, the dynamics under strictly geometrically two local non-Hermitian Hamiltonians can allow supersonic information propagation modes~\cite{ashida2018full}, without introducing any (quasi) long-range interactions~\cite{eisert2013breakdown, richerme2014non}. This can be probed by studying the two-point correlation functions of the quenched dynamics, starting from the ground state of some Hermitian Hamiltonian $\ket{\psi_{0}}$ and time-evolved with a different non-Hermitian Hamiltonian $H$: 
\begin{align}
    \langle\mathcal{O}_0\mathcal{O}_r\rangle(t) &\equiv \bra{\psi(t)}\mathcal{O}_0\mathcal{O}_r\ket{\psi(t)}/N(t)\\
    &= \bra{\psi_{0}}e^{iH^\dag t}\mathcal{O}_0\mathcal{O}_r e^{-iHt}\ket{\psi_{0}}/N(t)\\
    &= \bra{\psi_{0}}e^{iH^\dag t}e^{-iHt}e^{iHt}\mathcal{O}_0e^{-iHt}e^{iHt}\mathcal{O}_r e^{-iHt}\ket{\psi_{0}}/N(t)\\
    &= \bra{\psi_{0}}e^{iH^\dag t}e^{-iHt}\mathcal{O}_0(t)\mathcal{O}_r(t)\ket{\psi_{0}}/N(t). \label{eq:corr}
\end{align}

where $N(t) \equiv \bra{\psi(t)}\ket{\psi(t)}$ is the normalization factor, and $\mathcal{O}(t) = e^{iHt}\mathcal{O} e^{-iHt}$ is pseudo-Heisenberg evolution. In~(\ref{eq:corr}), the pseudo-Heisenberg evolved operators are restricted by the Lieb-Robinson bound, yet in general the term $e^{iH^\dag t}e^{-iHt}$ can be long-range, enabling supersonic modes even in non-interacting models. In this work, we consider the following non-Hermitian, interacting fermionic chain with open boundary conditions, whose low energy spectrum corresponds to a $\mathcal{PT}$-symmetric Luttinger liquid~\cite{dora2020quantum,moca2021universal}:
\begin{align}\label{eq: fermi} 
H_{\rm fermi} = \sum_{i}\frac{1+iJ_z}{2}(c^\dag_i c_{i+1}+h.c.) - i\frac{\pi J_z}{2} n_in_{i+1}.
\end{align}

Although $H$ is local, it is possible to show that the system has modes corresponding to the velocity at $x = 2kvt$ through a bosonized effective theory calculation~\cite{dora2020quantum}, where $k$ is an integer, and 
\begin{align}
    v = 1+\frac{\pi^2-8}{8}J_z^2
\end{align} is the LR velocity. This is in strong contrast with the Hermitian case or systems in a full Linbladian evolution~\cite{ashida2018full}, where only the first ($k=1$) mode is permitted. We study the quench dynamics of the system starting with the free-fermion ground state at $J_z = 0$ and time evolving the state with a non-zero $J_z$ Hamiltonian. We observe the supersonic modes through measuring a connected density-density correlator $C_{nn}^{0,r}(t) = [\langle{n}_0{n}_{r}\rangle(t) - \langle{n}_0\rangle(t)\langle{n}_{r}\rangle(t) ]$, which reveals both the presence of supersonic light cones and their decay, as theoretically predicted and numerically shown in Fig.~\ref{fig:quench}c. 

Non-Hermitian time evolution can result from conditional Lindblad-type dynamics, where continuous environmental monitoring is applied to prevent quantum jumps. However, in a direct simulation with, e.g., Trotterizaiton~\cite{trotter1959product} and block encoding~\cite{gilyen2019quantum}, the likelihood of post-selection diminishes over time, and thus observing non-Hermitian dynamics becomes increasingly challenging. In fact, the post-selection rate would decay exponentially in $t$ even if one could simulate open system dynamics exactly with an analog simulator. Nevertheless, the suppressed supersonic modes in the correlation function shown in Fig.~\ref{fig:quench}(a) suggest that the post-quench state still exerts some form of locality at least for $t \sim \log (n)$, providing an opportunity to be compressed by a low complexity state. For a finite given accuracy, it suffices to truncate the correlation function at a finite mode number and set the cut-off supersonic mode as the effective light cone, which would still be efficiently simulable.

In this work, we circumvent this challenge by utilizing two tensor-network-inspired techniques: GMPS~\cite{fishman2015compression} for preparing the initial state and VQC~\cite{lin2021real} for approximating the non-Hermitian time evolution. 
As we explain in the supplement, GMPS provides an efficient approximation to the Gaussian mean-field state by exploiting the near-area-law entanglement nature of the ground state, and VQC dramatically compresses time-evolved states that have high entanglement but low complexity.
%{\color {blue}{need to explain briefly what GMPS and VQC are.}} 
Sketched in Fig.~\ref{fig:quench}(b), we start with a compressed representation of the initial free fermion state and variationally compute a circuit representation at each Trotterized time step, as detailed in Algorithm~\ref{algo:compressed_circuit}. In practice, we found that it suffices to use a shallow, unitary variational circuit to capture the action of the non-Hermitian dynamics, without the necessity to introduce ancillary qubits.

\fbox{%
\begin{minipage}{1\linewidth}
\begin{algorithm}[H]
	\caption{-- Compressed quantum quench with VQC} 
	\begin{algorithmic}[1]
        \State Find the GMPS representation that prepares the free-fermion ground state. This could be done either deterministically or variationally~\cite{fishman2015compression,niu2022holographic}. 
        Denote this state as $|\psi_0\rangle$.
		\For {$t=\Delta t, 2\Delta t, ..., T$}
			\State Calculate the normalized target state after a Trotterized time evolution, $\ket{\psi_{t}} \propto e^{-iHt}\ket{\psi_{t_0}}$
			\State Initialize the parametrized quantum circuit.$\ket{\Tilde{\psi}}$ at $\bm\theta_0$; then use a gradient descent method to minimize infidelity $1- \mathcal{F} = 1-|\langle\psi_{t}|\Tilde{\psi}(\bm\theta)\rangle|^2$ wrt. $\bm\theta$
                \State Record optimized parameter $\bm\theta^*(t)$, $\bm\theta_{0}\leftarrow\bm\theta^*$
                \If{$1- \mathcal{F} > \epsilon$}
                \State Add a layer of PQC to the time evolution layers which we initialize at the identity
                \State Return to 4
			\EndIf
		\EndFor
	\end{algorithmic} \label{algo:compressed_circuit}
\end{algorithm}
 \end{minipage}
}

% For the parameterized quantum circuit (PQC), we use a %{\color {blue}{what do you by "number" here ?}} 
%  parameterized gate that preserves the $U(1)$ symmetry called `fSim' gates~\cite{arute2019quantum}, detailed in Appx.~\ref{app:experimental}. This has three benefits: it reduces the number of variational parameters from 15 to 5 per gate, saves the number of CNOT gates from 3 to 2 per gate, and allows error mitigation at essentially no additional cost.
 
 For the parameterized quantum circuit (PQC), we use a parameterized gate known as the `fSim' gate~\cite{arute2019quantum}, which preserves the $U(1)$ symmetry, as detailed in Supplementary Note 1. This choice offers three key benefits: it reduces the number of variational parameters from 15 to 5 per gate, decreases the number of CNOT gates from 3 to 2 per gate, and enables error mitigation with essentially no additional cost. A similar circuit architecture has been proposed in~\cite{niu2022holographic} and is denoted as the GMPS+X ansatz. Notice that, generally, the imaginary time evolution does not necessarily conserve charge---only when the initial state is in a single charge sector---%, except when the initial state is in a single charge sector, 
 which is the situation we consider here. Fixing the target infidelity $1-\mathcal{F} = 0.02$ and comparing our VQC to a naive Trotterization implementation, we find a reduction of CNOT gate count by a factor of 9 and the elimination of auxiliary qubits, reducing their number from 9 to 0. Most significantly, the compressed evolution method does not require postselection, unlike block encoding. As estimated in the Supplementary Note 1, this avoids a sampling overhead of $10^{19}$.

Another experimental challenge is that supersonic modes have relatively low amplitudes. Increasing the magnitude of $J_z$ can increase the intensity of the supersonic modes, but there is a threshold $\sim -1.33$, above which the Lutingger Liquid prediction would fail~\cite{dora2020quantum} and thus we set our $J_z = -1.3$. Still, observing the signal level is demanding: In the classical simulation result with time-evolving block decimation (TEBD) Fig.~\ref{fig:quench} at $n = 61$, the correlation strength of the first supersonic model already drops below $\sim0.01$.

To resolve this, instead of reporting the correlation function $C_{nn}^{i,i+r}(t)$ for a specific initial position in the chain, we report the correlation function averaged over all bulk sites $\overline{C_{nn}^{i,i+r}(t)}$ because the supersonic mode should be a universal property within the system. Asymptotically, this post-processing technique effectively increased the amount of data by a factor of $\sim(n-r)$, roughly an order of magnitude in this case. We illustrate our experimental details in Supplementary Note 1. 

Comparing the TEBD and emulation results between panels a) and b) of Fig.~\ref{fig:quench}, we find the supersonic mode is preserved in $\overline{C_{nn}^{i,i+r}(t)}$. Limited by the number of shots as trapped-ion simulators tend to have a slower clock rate in exchange for higher fidelity, we choose $t = 1$ and $t = 2$ for experimental implementation on Quantinuum's H1 machine. 
As demonstrated in Fig.~\ref{fig:quench} c), the vast majority ($>95\%$) of data points align with the classical prediction within two standard deviations with merely 5,000 shots. This would not have been possible without the usage of variational circuits and proper data processing. To compare, a recent dynamical digital simulation on a merely $n=6$ non-interacting fermionic chain took 160,000 shots~\cite{shen2023observation} to implement for each Trotter step. %We present a more detailed resource comparison in Appx.~\ref{app:experimental}. 

Two clear trends emerge from the data: the supersonic mode gains amplitude over time, while the LR mode loses amplitude. Additionally, by observing the shift in the LR peak between the two time slices (from $r = 2$ to $r = 3.5$), we estimate the LR velocity to be $v_{\rm exp} = 1.5$, which closely matches the theoretical prediction of $v = 1.4$.

It is worth noticing that for the time evolution generated by a generic non-Hermitian Hamiltonian, completely circumventing postselection with $\poly(n)$ gates is nearly impossible even with the use of variational compilation unless complexity classes collapses: $\mathsf{BQP = PostBQP = PP}$~\cite{aaronson2005quantum}. But what if we make strong assumptions about the Hamiltonians, such as they are physically local? Surprisingly, for certain initial states, we find an example where the time evolution of a Hamiltonian consisting only of single qubit terms could be too powerful to simulate, even for a very short period: 
\begin{theorem}[Single qubit non-Hermitian dynamics for logarithmic time can be hard to simulate with a quantum circuit]\label{theo: hard}
    There exist $\ket{\psi_{0}}$ and a non-Hermitian Hamiltonian $H$ consisting with single-qubit terms only, such that the circuit $C$ returns 
    \begin{align}
        C \ket{\psi_{0}} =  \frac{e^{-iHt}\ket{\psi_{0}}}{\sqrt{N(t)}}
    \end{align} for merely $t = \Theta(\log(n))$ requires $e^{\Omega(n)}$ two-qubit gates to implement.
\end{theorem}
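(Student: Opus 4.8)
The plan is to strip the theorem down to its only genuinely nonlocal ingredient. For single-qubit $H$ the evolution factorizes, $e^{-iHt}=\bigotimes_i e^{-ih_i t}$, so the unnormalized map $\ket{\psi_0}\mapsto e^{-iHt}\ket{\psi_0}$ is a product of single-qubit non-unitary operators and creates nothing a few gates could not; all the power must hide in the global rescaling by $\sqrt{N(t)}$. I would therefore recast the claim as a statement about amplitude amplification: any unitary circuit $C$ realizing $\ket{\psi_0}\mapsto e^{-iHt}\ket{\psi_0}/\sqrt{N(t)}$ must coherently rotate an exponentially small-norm branch of $\ket{\psi_0}$ up to unit weight, and the cost of doing so is what I will lower-bound. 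Concretely I would take the per-site damping $H=-i\sum_i \ket{1}\bra{1}_i$, so that $e^{-iHt}$ acts diagonally as $\ket{x}\mapsto e^{-t|x|}\ket{x}=\lambda^{|x|}\ket{x}$ with $|x|$ the Hamming weight and $\lambda=e^{-t}=n^{-\Theta(1)}$ at $t=\Theta(\log n)$; the relative suppression between weights $w$ and $w'$ is $\lambda^{\,w-w'}$, which is already exponentially large once $w-w'=\Theta(n)$.

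The construction of $\ket{\psi_0}$ is then dictated by the need to make $N(t)=\lVert e^{-iHt}\ket{\psi_0}\rVert^2$ exponentially small while the normalized target $\ket{\psi(t)}$ remains a well-defined nontrivial state. I would entangle a single low-weight ``needle'' branch with an exponentially large family of high-weight branches separated by a gap $\Delta w=\Theta(n)$, so that the filter suppresses the high-weight mass by $\lambda^{\Delta w}=e^{-\Theta(n)}$ relative to the needle and $\ket{\psi(t)}$ concentrates on the needle. The role of $t=\Theta(\log n)$ is pinned by matching exponents: to lift an amplitude of order $2^{-m/2}$ from a single marked branch up to $\Theta(1)$ one needs amplification exponent $t\,\Delta w=\Theta(m)$, and with $m=\Theta(n)$ and $\Delta w=\Theta(n/\log n)$ this forces $t=\Theta(n)/\Delta w=\Theta(\log n)$, exactly as claimed — so logarithmic time is precisely the threshold at which single-qubit filtering buys an exponentially strong effective postselection.

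For the lower bound I would invoke the optimality of amplitude amplification. Coherently amplifying a branch of weight $N(t)=e^{-\Omega(n)}$ requires $\Omega(1/\sqrt{N(t)})=e^{\Omega(n)}$ amplification steps in any scheme, by the hybrid/polynomial-method Grover bound, and each step contributes at least a constant number of two-qubit gates, giving the stated $e^{\Omega(n)}$ count. This has a clean non-relativized companion consistent with the surrounding discussion: were $C$ implementable with $\poly(n)$ gates, one could read off from the postselection weight $N(t)$ a weight-enumerator-type quantity that is $\#\mathsf{P}$-hard, placing $\mathsf{PP}$ inside $\mathsf{BQP}$ and collapsing $\mathsf{BQP=PostBQP=PP}$.

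The main obstacle I anticipate is the fixed-instance nature of the statement: because $C$ need only be correct on the single vector $\ket{\psi_0}$, a circuit tailored to one instance can in principle ``hard-code'' the needle and output $\ket{\psi(t)}$ cheaply, so the amplification lower bound does not apply verbatim to every $C$. Reconciling the literal ``any $C$'' phrasing with the fact that a fixed state always admits some circuit is the crux, and I would address it by selecting a hard instance via incompressibility: using $K(\ket{\psi(t)}\mid\ket{\psi_0})\ge K(\ket{\psi(t)})-K(\ket{\psi_0})$ together with a counting argument showing that for most admissible $\ket{\psi_0}$ the filtered, normalized output has exponential circuit complexity, which yields existence of the required $\ket{\psi_0}$ without letting $C$ trivially know the answer. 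Making this packing step quantitative — and checking that the $o(1)$ errors in the concentration of $\ket{\psi(t)}$ do not wash out the gap — is where I expect the real work to lie; the amplitude-amplification mechanism itself is the easy and robust part of the argument.
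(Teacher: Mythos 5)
There is a genuine gap, and it sits exactly where you placed your confidence. The "optimality of amplitude amplification" bound you invoke (hybrid argument / polynomial method) is a \emph{query} lower bound: it constrains the number of calls to an oracle that hides the marked branch. In the theorem there is no oracle --- $C$ is an arbitrary circuit acting on a single, fully known input state $\ket{\psi_0}$, and the target $e^{-iHt}\ket{\psi_0}/\sqrt{N(t)}$ is likewise a fixed, known vector. For such fixed-instance problems the Grover bound says nothing: as your own ``hard-coding'' objection shows, if the needle and the junk branches are simple (e.g. a $2^{-m/2}\ket{0^n}+\sqrt{1-2^{-m}}\ket{1^n}$-type state), the exponentially-postselected map is implementable with $O(n)$ gates. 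So the amplification mechanism is not ``the easy and robust part'' of a proof --- it is not a part of a valid proof at all. Consequently the entire burden falls on the counting/incompressibility step that you defer as future ``real work,'' and that step \emph{is} the theorem. (Your non-relativized companion claim also does not follow: for a diagonal single-qubit $H$ and a fixed instance, $N(t)$ is not a $\#\mathsf{P}$-hard quantity, and a single hard instance cannot collapse $\mathsf{BQP}$ and $\mathsf{PP}$.)

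For comparison, the paper's proof consists precisely of executing the counting argument you postpone, in a distinguishing (rather than compression) form. It takes $\ket{\psi_0}=\ket{\psi_{\rm Haar}}$ and $H_z=-i\sum_i Z_i$, and shows two things: (i) evolving for $t=\Theta(\log n)$ and measuring in the computational basis converts the exponentially small Porter--Thomas fluctuations of $p_{1^n}$ into constant-size fluctuations of $q_{1^n}=2^np_{1^n}/(2^np_{1^n}+1)$, so the filtered state is efficiently distinguishable from $\mathbb{I}/2^n$; (ii) by the counting bound of Brand\~ao et al.\ on \emph{strong state complexity}, with overwhelming probability over the Haar draw, no measurement implementable with $\poly(n)$ (indeed subexponential) gates achieves such a distinguishing bias. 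Any cheap $C$ realizing the normalized evolution would, composed with the computational-basis measurement, violate (ii), giving the $e^{\Omega(n)}$ gate bound. Your incompressibility route (complex $\ket{\psi_0}$, simple filtered output, so $C^\dagger$ applied to the needle would compress $\ket{\psi_0}$) could plausibly be completed by the same kind of $\epsilon$-net counting and would be a legitimately different packaging of the hardness; but as written, the proposal establishes neither that counting step nor any substitute for it, so the lower bound --- the entire content of the theorem --- remains unproven.
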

One example is to take $\ket{\psi_{0}}$ to be a Haar random state and $H = -i\sum_i Z_i$: running the time evolution for $t = \Theta(\log(n))$ allows one to distinguish the Haar random state to a maximally mixed state, which should be exponentially hard even for quantum circuits~\cite{brandao2021models}. We provide a detailed analytical proof of this no-go theorem in the Methods section. As a corollary, we also show that the theorem can be thought of as an oracle separation between complexity classes $\mathsf{BQP}$ and $\mathsf{PostBQP}$, as well as between $\mathsf{BQP}$ and $\mathsf{PostQNC_0}$ (to be defined in the Methods). %

\subsection{Crossing the exceptional point}\label{sec:ep}

We now focus on studying eigenstates of non-Hermitian Hamiltonians through variational circuits. We first recall that the matrix spectral theorem does not hold for non-Hermitian Hamiltonians ($H \neq H^{\dagger}$), which implies the eigenvalues of $H$ are not necessarily real, and the left eigenvectors are not the Hermitian conjugates of the right eigenvectors:
\begin{align}
    H\ket{\psi_i^R} = E\ket{\psi_i^R},\ \ \ \ \ H^\dag\ket{\psi_i^L} = E^*\ket{\psi_i^L}.
\end{align} 
\begin{figure*}
    \centering
    \includegraphics[width=.8\linewidth]{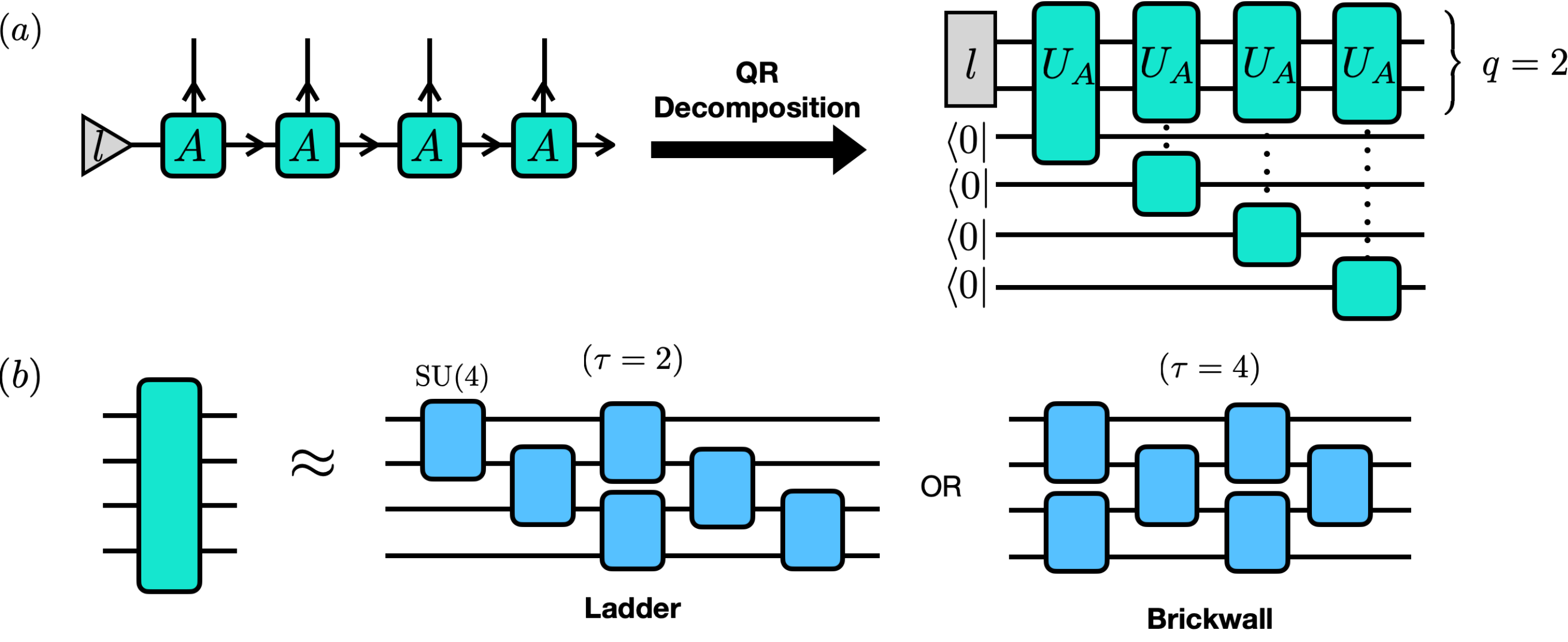}
    \caption{(a) While any many-body wavefunction can be cast as an MPS, it permits especially efficient ground state representations for $1d$ systems. In the right canonical form, each tensor in the MPS is an isometry~\cite{schon2005sequential,schon2007sequential}. To implement on a quantum computer, each isometry is embedded as a $2\chi \times 2\chi$ unitary through a QR decomposition. The MPS is then cast into a sequential quantum circuit. (b) Each $U_A$ can be synthesized from SU(4) gates with certain local geometries. Here we demonstrate a ladder geometry and a brickwall geometry. The depth of the local circuit is denoted $\tau$. The sequential quantum circuit can then be used as a variational ansatz. Due to having limited access to quantum resources, we first classically variationally optimize each SU(4) gate based on the cost function, e.g. Eq.~\ref{eq:cost}, and then compile each SU(4) gate into native gates of the Quantinuum H1 processor before executing. The energy of the spin system is calculated from sampling correlation functions in the real experiment. See also Supplementary Note 1 for more experimental details.}
    \label{fig:qmps}
\end{figure*}
The eigenstates within each left and right set are no longer orthogonal to each other. Instead, we have the biorthogonal relationship that exists between left and right eigenstates:
\begin{align}
    \braket{\psi_i^R}{\psi_j^R} \neq \delta_{ij},\ \ \ \ \ \braket{\psi_i^L}{\psi_j^R} = \delta_{ij}.
\end{align} 
In designing a variational algorithm targeting eigenstates of non-Hermitian Hamiltonians, these observations imply that the traditional objective function in a ground-state search, namely $\mathcal{L} = \braket{\psi({\bm\theta})|H}{\psi({\bm\theta})}$ is no longer suitable. Fortunately, a `variance minimization' method has been recently proposed~\cite{xie2023variational}. Instead of directly minimizing the energy function, it optimizes the following variance quantity, which is semi-positive definite:  
\begin{align}\label{eq:cost}
\mathcal{L}'(H,E,{\bm\theta}) = \bra{\psi({\bm\theta})}(H-E)(H^\dag-E^*)\ket{\psi({\bm\theta})}.
\end{align}
Here $E$ is a complex variable we optimize over. When $\mathcal{L}' = 0$ we are guaranteed that $\ket{\psi({\bm\theta})}$ is a right eigenstate of $H$ and $E$ is an eigenvalue (for the rest of the paper, we focus on the property of right eigenstates, and the left eigenstates can be found by substituting $H\leftarrow H^\dag$). Using this method, the authors devised a numerical optimizer and were able to numerically compute the left and right eigenstates, verifying the biorthogonal relations, as well as evaluating many observables~\cite{xie2023variational}.

Although the variance minimization scheme presented in Ref.~\onlinecite{xie2023variational} includes all the fundamental components of a variational quantum algorithm, the simulated system sizes were limited to only 7 qubits due to two main obstacles: First, they used a direct, full-state preparation method that is demanding on both classical and quantum memory. Capturing long-range correlations would then require a large circuit volume, making it impractical for larger systems. Second, the termination condition for variance minimization only ensures that some eigenstate is found, but not necessarily the one of interest. Given that the number of eigenvalues increases exponentially with system size, a random guess for $E$ is likely to result in an undesired eigenstate.

\begin{figure*}
    \centering
        \centering
        \includegraphics[width=.95\linewidth]{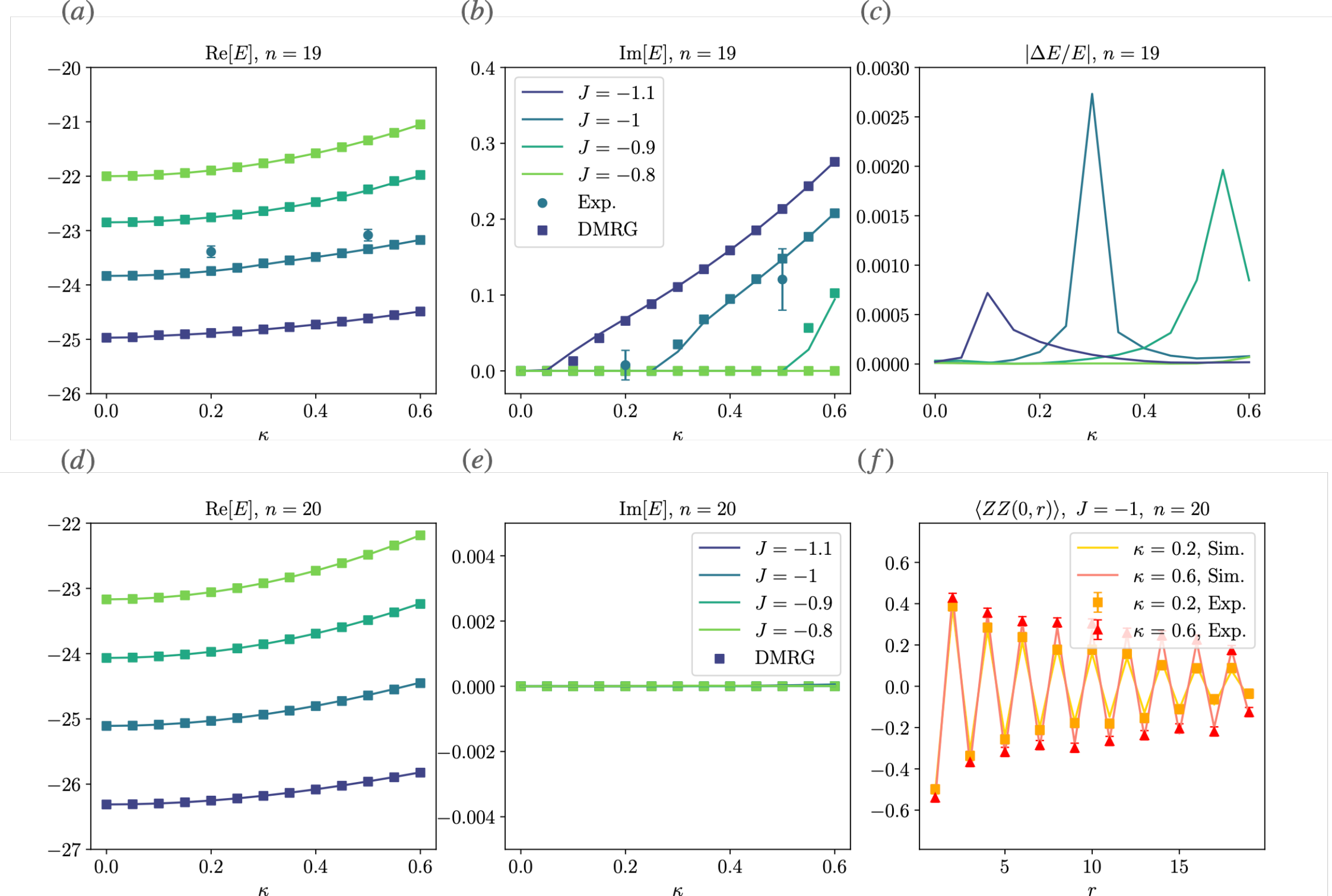}
        \caption{(a) The real and (b) imaginary parts of variational right ground state preparation results of a $n = 19$ non-Hermitian Ising chain are compared with that from a $\chi = 100$ NH-DMRG. The exceptional point can be read from the imaginary part of $E$. We experimentally execute the $J = -1, \kappa \in \{0.2, 0.5\}$ circuits for 2,000 measurement shots and report the energy outcomes, which are both within 2\% error of the DMRG values. (c) The absolute value of relative energy difference. Notice that the peak on each curve corresponds to the EP read from (b). (d) and (e) repeat the study of (a) and (b), except now the system size is set to 20 so that the system is EP-free. (f) Comparing $ZZ$ correlation functions measured in Quantinuum's H1 processor (2,000 shots) to DMRG results. All results are prepared with a $q = 2$, $\tau = 2$ ladder architecture as defined in Fig.~\ref{fig:qmps}, and error bars represent one statistical standard deviation. More experimental details on the implementation can be found in the Supplementary Note 1.}
        \label{fig:ising}
\end{figure*}

To solve the first obstacle, we make use of quantum circuit tensor network state (qTNS) techniques~\cite{schon2005sequential}, to sequentially simulate many-body quantum states, as shown in Fig.~\ref{fig:qmps}. Rather than a direct one-to-one encoding between a spin and a qubit, sequential simulation introduces $q$ ancilla qubits, or bond qubits, with local circuit depth $\tau$ to faithfully capture the near-area law entanglement structure of a physical state, generating a sequence of quantum operations that allows one to sample properties of the many-body state along a spatial direction without storing the full state in quantum memory; we defer a more detailed explanation to Supplementary Note 1. 

For the second obstacle, we designed a `warm start' method, as widely used in many other variational algorithms~\cite{lin2021real}. Namely, we first turn off the non-Hermitian field and solve for the Hamiltonian's ground state with a VQE algorithm and gradually turn on the imaginary field, feeding the previous optimization result as an initialization. Concretely, we consider the $1d$-transversal field Ising model with an imaginary longitudinal field~\cite{von1991critical}:

\begin{align}\label{eq: ising}
    H_{\rm Ising}(J, \kappa) = -\sum_i [JZ_{i}Z_{i+1} + X_{i} +i\kappa Z_{i}], 
\end{align}
where $J$ and $\kappa$ are real numbers. To find the ground state at a certain $\kappa_{\rm targ}$, we execute the procedure in Algo.~\ref{algo:vqe}:
\fbox{%
\begin{minipage}{1\linewidth} % Adjust width as needed
 \begin{algorithm}[H]
	\caption{-- VQE algorithm for dissipative Ising model with variance minimizaiton} 
	\begin{algorithmic}[1]
        \State Set the target Hamiltonian to be $H = H_{\rm Ising}(J, \kappa = 0) $ 
        \State Initialize $\ket{\psi({\bm\theta})}$ where the circuit with a variational circuit generated by quantum matrix product state (qMPS)~\cite{foss2021holographic}
        \State Find the Hamiltonian's ground state by minimizing $\mathcal{L} = \braket{\psi({\bm\theta})|H}{\psi({\bm\theta})}$ and record the optimized circuit parameter $\bm\theta^*$ and energy $E^*$;
		\For {$\kappa=\Delta \kappa,2\Delta \kappa,\ldots \kappa_{\rm targ}$}
			\State Initialize the circuit at $\bm\theta_{0}\leftarrow\bm\theta^*$ and $E_{0}\leftarrow E^*$
                \State Minimize the target function $\bra{\psi({\bm\theta})}(H-E)(H^\dag-E^*)\ket{\psi({\bm\theta})}$ over $\bm\theta$ and $E$.
                \State Record optimized parameter $\bm\theta^*(\kappa)$ and energy $E^*(\kappa)$ 
		\EndFor
	\end{algorithmic} \label{algo:vqe}
\end{algorithm}
\end{minipage}}
The dissipative Ising model exerts $\mathcal{PT}$-symmetry, but it goes through spontaneous symmetry breaking at $\kappa_c$, the so-called exceptional point: at $\kappa<\kappa_c$, the ground state energy remains real despite the non-Hermicity; at $\kappa>\kappa_c$, the ground state energy merges with the first excited state to form a complex conjugate pair in their energy values. The exceptional point is very sensitive to $J$ and the parity of the length of the chain. 

In the antiferromagnetic phase and open boundary condition, there will be a real to complex transition only when the system size is odd; when $n$ is even, such transition does not happen due to the ground state and first excited state being in different charge sectors~\cite{von1991critical}. In Fig.~\ref{fig:ising}, we compare simulation results for our numerical results to density-matrix renormalization group generalized to non-Hermitian systems (NH-DMRG) with iTensor~\cite{fishman2022itensor} at $\chi = 100$. Setting $\Delta \kappa = 0.05$ and using a ladder sequential circuit with $q = 2$ and $\tau = 2$, the variance minimization algorithm accurately captures the EP physics for different $J$ and $\kappa$ at system sizes $n = 19$ and $n = 20$. At each $J$, the optimizer returns a relative energy error below $0.01\%$ until $\kappa$ gets close to the EP, where the error rises to as high as $\sim 0.3\%$ and drops again, suggesting an increase in difficulty in representing the states near EP.

Additional numerical results for a non-Hermitian XXZ chain with size $n = 64$ can be found in the Supplementary Note 1. In Hermitian physics, it is known that quantum computers offer a ``data compression'' advantage in representing physical states~\cite{perez2006matrix,foss2021holographic,niu2022holographic}, as a quantum computer can store a MPS with bond dimension $\chi$ with merely $\sim \log\chi$ bond qubits. Our result extends this quantum advantage to non-Hermitian quantum material simulations.

%\YZ{finish the discussion part once we have experimental data, possibly for another 200 words. Need to talk about various boundary conditions giving different exceptional point behavior...} 

%should also look at left-right correlator $XX~L^{\frac{2}{5}}$
% to include:
%\subsection{Crossing the exceptional point on a quantum computer}
%A non-Hermitian Hamiltonian that describes an open system generically has complex eigenvalues, which must be studied on the complex plane, which leads to the emergence of eigenvalue topology, or spectral topology. This additional ‘layer’ of topology is a unique feature for non-Hermitian systems.

%Spectral topology fundamentally affects the parallel-transport behaviors of eigenvectors of a non-Hermitian Hamiltonian. Special attention is also needed to exceptional points — branch-point singularities on the complex eigenvalue manifolds

%We consider ground-state finding problem for the Ising Hamiltonian in Eq.~\ref{eq: ising}. Using a brick-wall local geometry on the sequential circuit, we classically optimize the circuit parameters and consider hardware implementation on a real quantum computer to take into account the shot noise and gate imperfections, which is simulated through qiskit~\cite{anis2021qiskit}. 

\section{Discussion}

We have combined two quantum state compression techniques, GMPS and VQC, to experimentally study the quenched dynamics of an extended strongly-correlated non-Hermitian system. Using a trapped-ion implementation on the Quantinuum H1 quantum processor, we have observed a supersonic mode in the connected density-density correlation function of a fermionic chain following a non-Hermitian quantum quench---a phenomenon conventionally forbidden by the Lieb-Robinson bound in Hermitian systems. The system sizes, time scales, and low sample complexity achieved in our experiments are enabled by the convergence of the advantages offered by universal quantum computers and the variational formulation of the dynamics employed in our experiments, allowing us to bypass costly postselection-based implementations with an affordable approach based on unitary dynamics. Similarly, we exploited the efficient compression offered by qMPS to experimentally prepare eigenstates of a non-hermitian spin chain where we experimentally measured correlation functions and energies across an exceptional point on a dissipative spin chain up to length $ n = 20 $ using only 3 qubits. These results show that universal quantum computers have great potential and offer an advantage as compared to traditional non-universal experimental platforms in the study of non-Hermitian quantum matters.

Our work also raises a number of intriguing questions regarding the opportunities and challenges of using digital quantum computers for simulating non-Hermitian systems. In particular, in Thm.~\ref{theo: hard} we established the hardness of the simulation of quench dynamics of certain initial states under single qubit non-Hermitian dynamics. A key open question arising from our experiments and theoretical findings is to determine the properties of the initial state and the non-Hermitian Hamiltonian that enable efficient unitary simulation. Here, a natural direction is to explore the use of VQC to alleviate the postselection requirements which are often encountered in systems such as those exhibiting measurement-induced phase transitions~\cite{li2018quantum}. Similarly, these and other related questions can be reframed in the context of quantum algorithms for solving dissipative differential equations. Specifically, under what conditions can quantum computers provide a significant advantage in solving these equations? This problem has been studied from an algorithmic perspective~\cite{liu2021efficient} and can potentially offer new insights into the question of simulability of non-hermitian systems more broadly. 

A natural extension of our work is to explore dynamics and eigenstate properties beyond one-dimensional systems where interesting non-Hermitian phenomena emerge~\cite{yoshida2019non}. One possibility is, that instead of doing a sudden quench, one could use our protocol to simulate non-adiabatic passage across an EP and investigate the proposed unconventional Kibble-Zurek mechanism~\cite{yin2017kibble,dora2019kibble}. Another interesting direction would be exploring the asymptotic quantum memory costs to approximate the ground states in different non-Hermitian phases of matter. In Hermitian systems, one could relate the hardness of compression to entanglement entropy laws~\cite{hastings2007area}. It is worth pursuing what metric one should use in the non-Hermitian case, as there does not exist a good entanglement measure due to the lack of an orthonormal basis~\cite{ashida2020non}.  As we push the boundaries of quantum computation, the techniques and insights gained from our study not only illuminate new pathways for exploring non-Hermitian physics but also pave the way for future advancements in quantum simulation aided by the convergence of innovative techniques such as compression and variational compilation with cutting-edge quantum hardware. 

\section{Methods}\label{app:methods}
This section details the numerical and analytical methods used throughout the work. We give a motivating example of non-Hermitian physics, followed by the discussion of sequential circuits generated by Matrix Product States (MPS) and Gaussian Matrix Product States (GMPS), explaining how they enable efficient quantum representations of near-area-law quantum states. %Next, we highlight the importance of using a proper warm start in all numerical simulations. 
Additionally, we provide an analytical proof of Theorem~I.1.
\subsection{Non-Hermitian physics: a motivating example}
We argued in the main text that a physical way to motivate non-Hermitian physics is from open quantum system dynamics. Usually, the Linbladian equation is used to describe the Markovian evolution of a system interacting with a thermal bath~\cite{manzano2020short}:

\begin{align}
    \frac{d\rho}{dt} &= -i[H,\rho] +\sum_i\gamma_i(L_i\rho L^\dag_i -\frac{1}{2}\{L^\dag_iL_i,\rho\})\\
     &= -i(H_{\rm eff}\rho - \rho H_{\rm eff}^\dag) + \sum_i\gamma_i L_i\rho L^\dag_i. \label{eq: linbladian}
\end{align} 
Here, the system density matrix $\rho$ evolves under $H$, the system Hamiltonian, and $\{L_i\}$ is a set of jump operators corresponding to the dissipations due to the bath. In the second line, we have redefined $H_{\rm eff} = H - \frac{i}{2}\sum_i\gamma_i L^\dag L$. 

The last term in Eq.~\ref{eq: linbladian}, $\sum_i\gamma_i L_i\rho L^\dag_i$, describes quantum jump processes that can be associated with a measurable physical quantity, such as a spontaneous emission of a photon. The stochastic time evolution can be classified into different quantum trajectories depending on the number of quantum jumps. Now, if we postselect on the absence of quantum jumps, we end up with a Schrodinger-like time evolution with a non-Hermitian Hamiltonian: 
\begin{align}\label{eq: schrodinger}
    \frac{d\rho}{dt} 
     &= -i(H_{\rm eff}\rho - \rho H_{\rm eff}^\dag)
\end{align}

\subsection{Review of sequential circuits}

In a quantum system consisting of $n$ qubits, the sequential quantum circuits are defined as follows:
\begin{definition}[$\tau$-sequential quantum circuit]
    Given a local universal gate set $ \mathcal{G} \subseteq U(4)$, a quantum circuit consisting of local gates in the set is called a $\tau$-sequential quantum circuit if each qubit is at most acted upon by $\tau$ gates in the circuit.
\end{definition}
Various interesting examples emerge at $\tau\ll n$. For example, when $\tau$ is a constant, from the definition, we can bound its computational power:
\begin{enumerate}
    \item Strictly contains all constant depth circuits;
    \item Is strictly contained in the set of linear depth circuits.
\end{enumerate}
The first inclusion holds because any constant-depth circuit is a sequential circuit and sequential circuits can prepare long-range correlates that cannot be accessed from constant-depth circuits such as the GHZ state. The second bound can be deduced from a) these sequential circuits are at most depth $O(n)$, as their circuit volume is $O(n)$ by definition and b) they cannot generate volume law entanglement states, which are permitted in linear depth circuits.

As such, sequential circuits are of particular interest in the NISQ era: Compared with a constant depth circuit, it has more representation power as its light cone can cover the whole system; compared to a `dense' constant depth circuit, it permits a compressed representation yet accurately captures certain classes of states, such as thermal states of locally interacting spin chains~\cite{zhang2022qubit}, electronic mean-field ground states~\cite{niu2022holographic}, chiral topological orders~\cite{chen2024sequential2}, maps between gapped phases~\cite{chen2024sequential}, projected entangled-pair states~\cite{wei2022sequential}, etc.; they can be combined with adaptive circuits to improve their representation power~\cite{lu2022measurement,foss2023experimental,malz2024preparation}, and experimental proposals on cQED devices have been proposed~\cite{osborne2010holographic,zhang2024sequential}.

\paragraph{Matrix product states} 
The discussion of sequential circuits originated from a celebrated quantum state compression method: MPS. Any pure quantum state $|\Psi\>$ can be expressed as a matrix-product state by sequentially performing Schmidt decompositions between local sites, turning wave-function amplitudes into a 1D tensor train (in physical states considered in this work, we focus on open boundary 1D chains with sites $x\in\{1,2,\dots n\}$, although the method can be generalized to infinite systems and larger on-site dimensions):
\begin{align}
	|\Psi\> = \sum_{\{s_x\}_{x=1}^n} \ell^T A^{s_1}A^{s_2}\dots |s_1s_2\dots \>.
\end{align} 

In this context, $ s_x \in \{0,1\} $ labels the basis states for site $ x $, and $ A^{s_x} $ are $\chi \times \chi$ matrices. The vectors $\ell$ are $\chi$-dimensional and determine the left boundary conditions. The memory and computational cost of Matrix Product State (MPS) computations scale with the bond dimension, $\chi$, which is lower bounded by the bipartite entanglement entropy across a cut through the bond.

Although representing a generic quantum state, such as the Haar random state still requires the bond dimension $\chi = e^{O(n)}$, many quantum states of physical interests such as 1D short-range correlated, area-law entangled states~\cite{hastings2007area, foss2021entanglement, niu2022holographic}, or thermal mixed states~\cite{hauschild2018finding, zhang2022qubit}, it is possible to truncate the entanglement spectrum to $\chi = O(1)$ independent of system size, allowing for efficient classical simulations of 1D gapped ground states.

Higher-dimensional systems can also be represented as MPS by treating them as a 1D stack of $(d-1)$-dimensional cross-sections. Yet even for area-law entangled states the required bond dimension grows exponentially with the cross-sectional area, making classical simulation impractical. In such cases including classically intractable cases such as 2D and 3D ground-states with symmetry-breaking~\cite{niu2022holographic} or (non-chiral) topological order~\cite{soejima2020isometric}, and finite-time quantum dynamics from any qMPS~\cite{foss2021holographic,chertkov2021holographic}, applying tensor network methods on a quantum computer could offer a significant advantage. 
\paragraph{Sequential quantum circuits generated by MPS (qMPS)}
Properties of any MPS in right-canonical form (RCF)~\cite{perez2006matrix} can be measured by sampling on a quantum computer and implementing its transfer-matrix as a quantum channel~\cite{gyongyosi2012properties} acting on $1$ ``physical" qubit and $q=\log_2\chi$ bond qubits~(see Fig.~2 for a graphical representation). 
Each tensor $A$ is then embedded as a block of larger unitary operator $U_A$ acting on a fixed initial state $|0\>$ of the physical qubits. After the application of $U_A$, the physical qubits can be measured in any desired basis while entanglement information is consistently stored on the bond qubits. 
The process is then repeated for each site in sequence from left to right. In this way, one can measure any product operator of the form $\prod_{x=1}^n \mathcal{O}_x$, which forms a complete basis for general observables. 
Crucially, once measured, the physical qubit for site $x$ can be reset to $|0\>$ and reused as the physical qubits for site $x+1$, enabling a small quantum processor to achieve quantum simulation tasks with sizes far larger than the number of qubits available ~\cite{foss2021holographic}.

To summarize, the qMPS procedure for sampling an observable of the form $\<\psi|\prod_{x=1}^n O_x|\psi\>$ is:

\fbox{%
\begin{minipage}{1\linewidth} % Adjust width as needed
 \begin{algorithm}[H]
	\caption{-- Generating sequential quantum circuits from MPS} 
 \begin{algorithmic}[1]
\State Prepare the bond qubits in a state corresponding to the left boundary vector $\ell$. This can be done with up to $\log \chi$ ancilla qubits
 \For {$x=1\dots n$}
\State Perform a synthesized quantum circuit $U_A$ at site $x$, entangling the physical and bond qubits. 
\State Measure the physical qubit in the eigenbasis of $O_x$ and weight the measurement outcome by the corresponding eigenvalue of that observable.
\State Reset the physical qubit for site $x$ in a fixed reference state, $|0\>$.
\EndFor
\State Discard the bond qubits.
 \end{algorithmic}
\end{algorithm}
\end{minipage}}

Moreover, the entanglement spectrum of the bond-qubits in between sites $x$ and $x+1$ coincides with the bipartite entanglement spectrum of the physical MPS at that entanglement cut, further enabling measurement of non-local entanglement observables, as recently demonstrated experimentally~\cite{foss2021holographic}. The left boundary vector $\ell$ is prepared by a unitary circuit acting on the bond space. For an open chain, there is no need to specify the right boundary condition as no entanglement is to be stored beyond $x = n$, and thus the bond qubits are disentangled with the physical qubit and can be traced out. When all $U_A$ matrices are set to be the same, the lack of right boundary conditions in the formalism describes a semi-infinite wire~\cite{foss2021holographic}. 

%qMPS methods enable qubit-efficient access to a subset of MPS with exponentially large bond dimension (in qubit number), including classically intractable cases such as 2D and 3D ground-states with symmetry-breaking~\cite{niu2022holographic} or (non-chiral) topological order~\cite{soejima2020isometric}, and finite-time quantum dynamics from any qMPS~\cite{foss2021holographic,chertkov2021holographic}.

By exploiting the efficient compression~\cite{orus2019tensor} of physically interesting states, such as low-energy states of local Hamiltonians, qTNS methods enable simulation of many-body systems relevant to condensed-matter physics and materials science with much smaller quantum memory than would be required to directly encode the many-body wave-function.

\paragraph{Gaussian MPS}%
While MPS is a generic approach to quantum state compression, a subclass of MPS, the Gaussian MPS (GMPS), explores the near-area law entanglement of free fermion systems, enabling even more efficient representations. The Hamiltonian of a Gaussian (i.e. non-interacting) fermion system with $ n $ sites has the form 
\begin{align}
    H = \sum_{i,j=1}^{n} c^\dagger_i h_{ij} c^{\vphantom\dagger}_j
\end{align} 
This system can be fully characterized by its $ n \times n $ two-point Green's function $ G_{ij} = \langle c^\dagger_i c^{\vphantom\dagger}_j \rangle $ with highly degenerate eigenvalues of either 0 (unoccupied) or 1 (occupied sites). Crucially, $G_{ij}$ remains invariant under any unitary transformation as long as one does not mix occupied and unoccupied states.

The compression scheme presented by Fishman and White~\cite{fishman2015compression} exploits this unitary invariance by progressively disentangling local degrees of freedom in blocks of $ B $ adjacent sites. Moreover, the ground states of Gaussian fermionic systems have near-area-law entanglement. Therefore, choosing block size $ B $ large enough, most of the block eigenvalues must be exponentially close to 0 or 1. This enables a sequence of operations that compresses the correlation matrix:
\fbox{%
\begin{minipage}{1\linewidth}
 \begin{algorithm}[H]
	\caption{-- GMPS compression} 
 \begin{algorithmic}[1]
 \For {$x=1\dots n$}
\State Start with $G_{xx}$, examine the next $B \times B $ sub-matrix of $ G $ to the bottom right.
\State Identify the eigenvector with an eigenvalue closest to 0 or 1. 
\State Apply a series of $ 2 \times 2 $ single-particle unitary rotations $\prod_{\alpha=1}^{B-1} u_{x,\alpha}^\dagger $ on $G$ to move this eigenvector to the first site of the block. Denote the resultant correlation matrix as $G'$.
\State Set $G\leftarrow G'$
\EndFor
 \end{algorithmic}
\end{algorithm}
\end{minipage}}
%The process begins with the upper-left $ B \times B $ block of $ G $, where the eigenvector with an eigenvalue closest to 0 or 1 can be identified. A series of $ 2 \times 2 $ single-particle unitary rotations is then performed to move this eigenvector to the first site of the block, effectively separating the site from the rest of the system. This procedure is iterated over the remaining $(n-1) \times (n-1)$ sites until the Green's function is approximately diagonalized.

%The composition of these basis rotations results in an $ n \times n $ unitary matrix, $ u^\dagger = (\prod_{\alpha=1}^{(B-1)(N_o - \frac{B}{2})} u_\alpha)^\dagger $, which is made up of $ 2 \times 2 $ single-particle unitaries indexed by $ \alpha $. This unitary approximately diagonalizes the Green's function. 

In each iteration, the compression algorithm separates a site from the rest of the system. At the end of execution, Green's function is approximately diagonalized. Reversing the process allows one to transform a product state into the desired Gaussian fermionic state. 

\paragraph{GMPS as a sequential quantum circuit} To simulate fermions on a quantum computer, these single-particle operations $u_\alpha$ in the fermionic language, can be converted into a circuit for the many-particle Hilbert space of size $ 2^n $ by replacing each $ 2 \times 2 $ unitary $ u_{x,\alpha}^\dagger $ with a two-qubit gate: 
\begin{align}
    U_{x,\alpha} = \exp[\sum_{ij} c^\dagger_i (\log u_{x,\alpha}^\dagger)_{ij} c^{\vphantom\dagger}_j] = \exp[\sum_{ij} \sigma^+_i (\log u_{x,\alpha}^\dagger)_{ij} \sigma^-_j] .
\end{align}

As pictured in Fig.~1(b), the resulting ladder circuit $ U = \prod_{x,\alpha} U_{x,\alpha} $ can be interpreted as a $B$-sequential quantum circuit generated by MPS (Fig.~2) with bond dimension $ \chi = 2^B $ whose causal cone slightly differentiates from generic non-Gaussian (q)MPS of the same bond dimension. In the 1D Hamiltonian considered in Eq.~5 and $J_z = 0$, we numerically find that it suffices to choose block size $B = 3$ to prepare the ground state to energy infidelity $<1\%$ on a $n=18$ chain. 

Directly preparing an arbitrary Gaussian state with a ladder circuit on $ n $ qubits requires $ O(n^2) $ two-qubit gates (as shown in~\cite{arute2020hartree}). To compare, a compressed GMPS ground state can be prepared with $ O(nB) $ two-qubit gates acting on $ O(B) $ qubits when implemented sequentially. The efficiency of this compression depends on the block size $ B $ needed for accurate state approximation. 

Numerical evidence and entanglement-based arguments suggest that for ground states of local Hamiltonians in 1D systems of length $ L $ with a target error threshold $ \epsilon = 1 - \frac{1}{L} \sum_{i,j} |G_{ij}^{(\text{GMPS})} - G_{ij}| $, the required block size $ B $ scales as $ \log \epsilon^{-1} $ for gapped states or $ \log L \log \epsilon^{-1} $ for gapless metallic states. In Niu et al.~\cite{niu2022holographic}, these results are extended to $ d $-dimensional systems, where $ B $ generally scales with the bipartite entanglement entropy $ S(L) $:

\begin{align}
B\sim \begin{cases}
 L^{d-1}\log \epsilon^{-1} & \text{gapped} \\
  L^{d-1}\log L\log \epsilon^{-1} & \text{gapless} 
  \end{cases}
\end{align}

This result holds even for topologically non-trivial Chern band insulators that have an obstruction to forming a fully localized Wannier-basis. Compared to standard adiabatic state-preparation protocols, this method dramatically reduces the number of qubits required ($L^{d-1}B$ vs. $L^d$) to implement the GMPS on a quantum computer~\cite{niu2022holographic}; the compressed state can be then used as an initial state for quantum quench or adiabatic state preparations. 

\subsection{Analytical proof of Thm.~I.1 and implication in complexity theory}

In this section, we provide analytical proof of Thm.~I.1, which says there exists a non-Hermitian $H$ consisting of single qubit terms only and an initial state $\ket{\psi_{0}}$ where the dynamics for $\Theta(\log (n))$ time can be exponentially hard to approximate with a quantum circuit. One explicit example is to take $\ket{\psi_{0}}$ to be a Haar random state $\ket{\psi_{\rm Haar}}$ and $H_{z} =-i\sum_i Z_i$. Our hardness of approximation result comes from two facts:
\begin{enumerate}
    \item Applying the dynamics generated by $H_{z}$ for evolution time\ $t = \Theta(\log(n))$ and measure in the computational basis allows one to distinguish a Haar random state from a maximally mixed state $\rho_m = \mathbbm{I}/d$, where $d = 2^n$.
    \item With overwhelmingly high probability, distinguishing a state sampled from Haar random ensemble from $\rho_m $ is exponentially hard.
\end{enumerate}
The task can be formalized into a decision problem: 
\begin{problem}[Distinguishes a Haar random state from a maximally mixed state]\label{prob: Haar}
An oracle $\mathcal{O}$ prepares copies of a fixed quantum state on a $n$-qubit quantum register that is promised to be either a maximally mixed state or a pure state sampled Haar randomly. Decide which is the case with as few queries as possible.
\end{problem}
We first prove that time evolution with a single qubit dissipation channel can efficiently solve Prob.~\ref{prob: Haar}.
\begin{lemma}[Single qubit dissipation distinguishes a Haar random state from a maximally mixed state] \label{lem:easy}
With $O(k)$ queries and probably $1-e^{-O(k)}$, a quantum state sampled Haar randomly can be distinguished from a maximally mixed state by time evolving the state for $t = \Theta(\log(n))$ with $H_z$. 

%\begin{align}\label{eq:trace_distance}
%    \Tr{\ketbra{\psi_{\rm Haar}(t)}-\rho_m(t))}| > 0.1. 
%\end{align}
%$\rho(t)$ and $\ket{\psi(t)}$ denotes normalized density matrices and states after time evolution.
\end{lemma}
\begin{proof}
    We consider the output distributions of a Haar random state before and after time evolution, measured on the computational basis \begin{align}
        p_s := \left|\braket{s}{\psi_{\rm Haar}}\right|^2,\ q_s:= \left|\braket{s}{\psi_{\rm Haar}(t)}\right|^2
    \end{align}
    $p_s$ is known to fluctuate around its mean value i.e. the distribution of $p_s = 1/d$, but with a variance exponentially small in $n$. This can be seen from the numerical experiments on the left panel of Fig.~\ref{fig:output}. As a result, it is exponentially hard to distinguish a Haar random state and a maximally mixed state with a naive computational basis measurement.

    The effect of the time evolution generated by $H_z$ is that it exponentially re-distributes weights of all output strings according to their Hamming weight $w_s$, or the number of 0's in $s$. This can already be told from a single qubit case: $$e^{-Zt} (a\ket{0}+b\ket{1}) = ae^{-t}\ket{0} + be^t\ket{1}$$.
    
    We begin by finding $t$, such that after applying the evolution generated by $H_z$ to the maximally mixed state, the output weight on the $s=1^n$  can be a constant $c$. By setting:
    \begin{align}\label{eq:weight}
        \left(\frac{e^{2t}}{e^{-2t}+e^{2t}}\right)^{n} = c
    \end{align}
    
    and solving for $t$, we get 
    \begin{align}
        t &= -\frac{1}{4} \log(-c^{-1/ n} (-1 + c^{1/ n}))\\
          & = -\frac{1}{4} \log(c^{-1/ n}-1)\\
          & = -\frac{1}{4} \left[\log(-\frac{1} {n} \log (c))+O(n^{-1}))\right]\\
          & = \Theta(\log(n))
    \end{align}
    What would the output distribution look like for a Haar random state after applying the same time evolution? The normalized weight is just %For states sampled from Haar randomly, the output weight on any fixed string (in this case, $s = 1^n$) is known to follow the Porter-Thomas distribution~\cite{}: 
    %$${\rm PT}(p_s) = 2^n e^{2^n p_s}$$

    \begin{align}
        q_{1^n}  &=\frac{p_{1^n} e^{2n}}{\sum_s p_s e^{4w_s-2n}}\\
       &=\frac{p_{1^n} e^{2n}}{\sum_{i=0}^{n}\sum_{w_s = i}p_s e^{4i-2n}}\\
       &=\frac{p_{1^n} e^{2n}}{p_{1^n} e^{2n}+ p_{0^n} e^{-2n}+\sum_{i=1}^{n-1}2^{-n}\begin{pmatrix}
           n\\
           i
       \end{pmatrix}p_s e^{4i-2n}} \\
       &:=\frac{p_{1^n}}{p_{1^n}+2^{-n}\eta }
       \label{eq:weight_haar}
    \end{align}
    %$\Tr{\ketbra{\psi_{\rm Haar}(t)}-\rho_m(t))}| > 0.1$ 
    In the second from last step, we use the fact that each entry of a Haar state vector consists of two standard normal variables and the weights on each string should follow the Porter-Thomas distribution: ${\rm PT}(p_s) = 2^n e^{-2^n p_s}$, and a sum over $\poly(n)$ terms would converge to its mean value. We may also ignore the $p_{0^n} e^{-2n}$ in the denominator as it is exponentially small in $n$. 

    Under the assumption of Eq.~\ref{eq:weight}, we have 
    \begin{align}
        \frac{1}{1+\eta}= c
    \end{align}

     W.l.o.g., setting $c = 1/2$ gives $\eta  = 1$; therefore the Eq.~\ref{eq:weight_haar} returns
    \begin{align}
        q_{1^n} = \frac{2^{n}p_{1^n}}{2^{n}p_{1^n}+1},
    \end{align}
     
     This means the variance of this distribution is now a constant independent of $n$. For example, the probability of $P[q_{1^n}\geq0.6]$ is 
     \begin{align}
         P[q_{1^n}\geq0.6] &= P[q_{1^n}\geq0.6]\\
         & = P\left[\frac{2^{n}p_{1^n}}{2^{n}p_{1^n}+1}\geq0.6\right]\\
         & = P\left[p_{1^n}\geq 1.5\times2^{-n}\right]\\
         & = \int_{1.5\times2^{-n}}^\infty 2^n e^{-2^n p} dp\\
         & \approx 0.223
     \end{align}
     As evident from Fig.~\ref{fig:output}, with constant probability, a measurement in the computational basis now allows efficient distinguishing between the Haar random state and the maximally mixed state after the time evolution. Notice that, if we choose $t = \poly(n)$, the two states are once again hard to be distinguished because they both purify to a product state. This probability can be boosted to $1-e^{-k}$ by randomly applying a layer of $X$ gates before the time evolution (namely, randomly selecting a string $s$ to amplify and probe its amplitude) and repeating $O(k)$ times. 
\end{proof}
\begin{figure}
    \centering
        \centering
        \includegraphics[width=.75\linewidth]{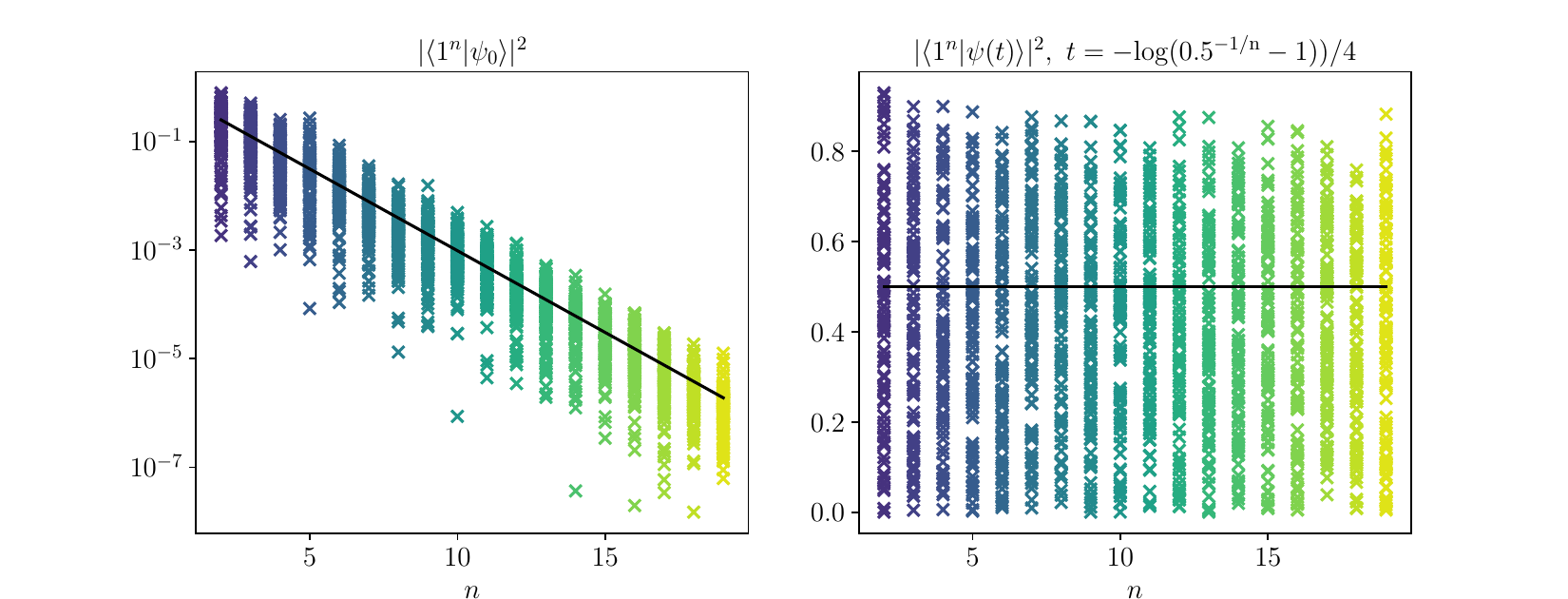}
    \caption{Output weight on the $1^n$ string of 100 Haar random states before (left) and after (right) applying time evolution generated by $H_z$ for a time length $t$; the output weight of a maximally-mixed state, $\rho_m$ is plotted with a black line for comparison. In both cases, the output weight of the Haar states fluctuates around the output value of the maximally mixed state. However, the variance on the left plot is exponentially small in $n$, and the variance after the non-Hermitian time evolution becomes a constant, allowing efficient distinguishment between the Haar random state and the maximally mixed state. This cannot be accomplished with any local quantum channels such as the amplitude damping channel~\cite{fefferman2023effect}.}
    \label{fig:output}
\end{figure}
In the unitary setting, the minimum number of gates required to implement a measurement $M$ that can distinguish a given quantum state $\ket{\psi}$ from the maximally mixed state $\rho_m$ to a certain resolution $\delta$ is defined as the strong state complexity. Formally, let $\beta(r, \ket{\psi})$ be the maximum bias with which $\ket{\psi}$ can be distinguished from the maximally mixed state using a circuit with at most $r$ gates from the gate set $\mathcal{G} \subseteq U(4)$:
\begin{align}
    \beta(r,\ket{\psi}) =& \text{ max}\ |\Tr{M(\ketbra{\psi}{\psi}-\rho_m)}|\\
    &\text {s.t. }M \text{ can be implemented with at most } r \text{ gates}
\end{align}

\noindent Then the strong state complexity is defined as:
\begin{definition}[Strong state complexity]\label{def:sc_state}
    For a given $r \in \mathbb{N}$ and $\eta \in (0, 1)$, a pure state $\ket{\psi}$ has strong $\eta$-state
complexity at most $r$ if and only if
$\beta(r,\ket{\psi}) \geq 1 - 1/d - \delta$.
We denote this $\mathcal{C}_\delta (\ket{\psi}) \leq r$.
\end{definition}
The $1/d$ in the definition comes from that any pure state can be trivially distinguished from a maximally mixed state with trace distance $1/d$. For Haar random pure states, it has been shown that the vast majority of them have exponentially high strong state complexity:
\begin{proposition}[Strong state complexity of Haar random states]
    The probability that $\ket{\psi_{\rm Haar}}$ has strong circuit complexity less than $r$ is 
    \begin{align}
        \Pr [\mathcal{C}_{\delta}(\ket{\psi_{\rm Haar}})\leq r ] \leq 4.0144 d(n + 1)^r|\mathcal{G}|^r \exp{-\frac{d(1-\delta^2)}{9\pi^3}}
    \end{align}
\end{proposition}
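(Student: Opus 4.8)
The plan is to establish the bound through a union bound over all size-$r$ measurement circuits, combined with a measure-concentration (Levy's lemma) estimate showing that, for any \emph{fixed} measurement, a Haar-random state is exponentially unlikely to be distinguished from $\rho_m$ with bias approaching the maximum $1-1/d$. The two competing quantities are then the number of distinct $r$-gate circuits, which grows only like $((n+1)|\mathcal G|)^r$, and the per-circuit failure probability, which decays like $\exp(-\Omega(d))$; since $d=2^n$ dominates any $\poly(n)^r$ growth for $r=\poly(n)$, their product is exponentially small.

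First I would reduce the optimization over measurements to a union over unitaries. Any measurement implementable with $r$ gates can be written as a unitary $U$ (a product of $r$ gates drawn from $\mathcal G$) followed by a computational-basis readout with some accept set $S$. For fixed $U$, the bias $|\Tr[M(\ketbra{\psi}{\psi}-\rho_m)]|$ is maximized by accepting exactly those strings with $|\langle s|U|\psi\rangle|^2>1/d$, so the attainable bias equals $\beta_U(\ket\psi)=\sum_s\big(|\langle s|U|\psi\rangle|^2-1/d\big)_+=\tfrac12\sum_s\big||\langle s|U|\psi\rangle|^2-1/d\big|$, i.e.\ the total-variation distance between the output distribution and the uniform one. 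The crucial consequence is that the optimal accept set is \emph{determined} by $U$, so one never has to union over the $2^d$ possible subsets---only over the circuits themselves.

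Next I would apply concentration of measure to $\beta_U$ for fixed $U$. As a function of $\ket\psi$ on the unit sphere $S^{2d-1}$, $\beta_U$ is Lipschitz with an $O(1)$ constant (each $|\langle s|U|\psi\rangle|^2$ has gradient bounded by $2|\langle s|U|\psi\rangle|$, and total variation is $1$-Lipschitz in trace distance), and its Haar expectation is the constant $\mathbb E_\psi[\beta_U]=1/e+o(1)$ by Porter--Thomas statistics. Levy's lemma then yields $\Pr_\psi[\beta_U(\ket\psi)\ge 1-1/d-\delta]\le 2\exp\!\big(-\Omega(d(1-\delta)^2/\pi^3)\big)$, where what is exponentially suppressed is the deviation from the mean up to the near-maximal bias. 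Counting at most $(n+1)$ placements and $|\mathcal G|$ choices per gate gives $\le((n+1)|\mathcal G|)^r$ distinct circuits, and the union bound multiplies the two factors. The numerical prefactor $4.0144$, the leading factor $d$, and the precise constant $9\pi^3$ together with the exponent $1-\delta^2$ then emerge from bookkeeping: the Levy prefactor $2$, the subtraction of $\mathbb E[\beta_U]$ and the $1/d$ offset, a net/union over the $d$ target strings, and lower-bounding the squared gap $(1-1/d-\delta-\mathbb E[\beta_U])^2$ by a constant multiple of $(1-\delta^2)$ over the relevant range of $\delta$.

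The main obstacle I anticipate is \emph{not} the counting but the concentration step: one must pin down the correct Lipschitz constant and expectation of $\beta_U$ and thread them through Levy's lemma with the real sphere dimension $2d$ so that the exponent lands exactly at $d(1-\delta^2)/9\pi^3$ and the prefactor collapses to $4.0144\,d$. Equivalently, the conceptual heart is verifying that recognizing $\beta_U$ as a total-variation distance---so that the accept set is fixed and the function is genuinely Lipschitz, rather than an unwieldy maximum over $2^d$ subsets---is what keeps the union bound from blowing up; the remaining steps are controlled but constant-sensitive estimates.
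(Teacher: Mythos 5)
Your high-level skeleton---a union bound over the finitely many $r$-gate circuits combined with Levy-type concentration for each fixed circuit---is exactly the strategy of the proof this paper relies on, which is in fact not spelled out here at all: the paper simply cites Brand\~{a}o et al.\ and summarizes it as a counting argument (``any fixed measurement can only distinguish a small number of states''). However, your key reduction in the first step introduces a genuine error: you allow the readout to accept on an \emph{arbitrary} subset $S$ of computational-basis strings at zero gate cost, so that the per-circuit bias becomes the total-variation distance between the output distribution and the uniform one. Under that measurement model the proposition is actually \emph{false}: by Porter--Thomas statistics this total-variation distance concentrates at $1/e\approx 0.37$ for a Haar-random state, so already with $r=0$ gates one achieves bias $\geq 1-1/d-\delta$ whenever $\delta \geq 1-1/e-1/d$; hence $\Pr[\mathcal{C}_\delta\leq r]\to 1$ for such $\delta$, while the claimed bound $4.0144\, d\,((n+1)|\mathcal{G}|)^r\exp(-d(1-\delta^2)/9\pi^3)$ is exponentially small for every fixed $\delta<1$. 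The resolution is that in the cited definition the accept set is \emph{not} free: the measurement is the circuit followed by a fixed single-qubit (equivalently, single accepted string) readout, and implementing a complicated accept set $S$ costs gates---that is precisely what the complexity measure charges for.

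This mislocation of the model also breaks the quantitative part of your argument. The functional that must be fed into Levy's lemma is the single-outcome probability $|\langle s|V|\psi\rangle|^2$ (or $\langle\psi|V^\dagger(\ketbra{0}{0}\otimes I)V|\psi\rangle$ with ancillas), whose Haar mean is $1/d$; a deviation of size $1-1/d-\delta$ above that mean is then exponentially suppressed for \emph{every} $\delta<1$, which is what produces an exponent proportional to $d(1-\delta)^2$ (written as $1-\delta^2$ in the paper's transcription) uniformly in $\delta$. Your functional $\beta_U$ instead has constant mean $1/e$, so your concentration step only controls deviations of size $1-\delta-1/e$, is vacuous for $\delta\geq 1-1/e$, and cannot recover the stated exponent by bookkeeping. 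Relatedly, the prefactor $d$ in the bound arises from the union over the $d$ possible accepted output strings (circuit count times $d$), which contradicts your own central claim that no union over accept sets or strings is needed. In short: the proof architecture is right, but the measurement model---and hence the object to which concentration is applied---must be corrected before the counting and the constants can come out as stated.
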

The proof given by~\cite{brandao2021models} essentially comes from a counting argument that any fixed measurement can only distinguish a small number of states, and thus to distinguish the vast majority of Haar random states, the complexity of measurement must grow exponential in $n$. Even for $1-\delta^2 = 1/\poly(n)$, this probability of distinguish remains exponentially small in $n$ when $r = \poly (n)$. This means 
\begin{lemma}[Hardness of distinguishing a Haar random state with a $\mathsf{BQP}$ machine]\label{lem:hard}
    With overwhelming probability, a $\mathsf{BQP}$ machine making polynomial queries to $\mathcal{O}$ cannot solve Prob.~\ref{prob: Haar}.
\end{lemma}
Combining Lem.~\ref{lem:easy} and Lem.~\ref{lem:hard} completes the proof of Thm.~I.1.
Further, as we explain in Supplementary Note 1, each single-qubit dissipation channel can be implemented with a two-qubit gate and post-selection on the single ancillary qubit. It turns out that the result we have proved can be summarized in a computational complexity language: With overwhelming probability, Prob.~\ref{prob: Haar} can be solved by a quantum circuit at merely constant depth when postselection is allowed (denoted as $\mathsf{PostQNC_0}$), but also with overwhelming probability, it cannot be solved by a polynomial-size quantum circuit. Namely,
\begin{corollary}[Oracle separation]
    There exists a quantum oracle $\mathcal{O}$ (as defined in Prob.~\ref{prob: Haar}) such that $\mathsf{PostQNC_0}^\mathcal{O}\not\subseteq\mathsf{BQP}^\mathcal{O}$;  relative to the same oracle, $\mathsf{BQP}^\mathcal{O} \subset\mathsf{PostBQP}^\mathcal{O}$.
\end{corollary}
%\YZ{finish proof}

\section{Data Availability}
The generated in this study have been deposited in~\cite{yuxuan_zhang_2025_14834474}.
\section{Code Availability}
The simulation code used in this work can be found at~\cite{yuxuan_zhang_2025_14834474}.
\section{Acknowledgement}
We thank Sarang Gopalakrishnan, Tim Hsieh, Lin Lin, Dvira Segal, Nathan Wiebe, Roeland Wiersema, Cenke Xu, and Yijian Zou for their insightful discussions. We extend gratitude to Michael Foss-Feig and Peter Groszkowski for their invaluable assistance in accessing Quantinuum's hardware resources and thank the reviewers for their valuable feedback. J.C. acknowledges support from the Shared Hierarchical Academic Research Computing Network (SHARCNET), Compute Canada, and the Canadian Institute for Advanced Research (CIFAR) AI chair program. Y.Z., J.C., and Y.B.K. were supported by the Natural Science and Engineering Research Council (NSERC) of Canada. Y.Z. and Y.B.K. acknowledge support from the Center for Quantum Materials at the University of Toronto. Y.Z. was further supported by a CQIQC fellowship at the University of Toronto, and in part by grant NSF PHY-2309135 to the Kavli Institute for Theoretical Physics (KITP). This research used resources of the Oak Ridge Leadership Computing Facility, which is a DOE Office of Science User Facility supported under Contract DE-AC05-00OR22725, and, in part, by the Province of Ontario, the Government of Canada through CIFAR, and companies sponsoring the Vector Institute~\url{www.vectorinstitute.ai/#partners}. 
\section{Author Contributions}
J.C., Y.B.K. and Y.Z. conceived the project and designed the theoretical framework. J.C. and Y.B.K. supervised the project and contributed crucial insights on interpreting the results. Y.Z. performed numerical simulations, executed the quantum circuits, and analyzed the results. Y.Z. wrote the manuscript with input from all authors.

\section{Competing Interests}
The authors declare no competing interests.
\appendix
\section{Methods}\label{app:methods}
This section details the numerical and analytical methods used throughout the work. We give a motivating example of non-Hermitian physics, followed by the discussion of sequential circuits generated by Matrix Product States (MPS) and Gaussian Matrix Product States (GMPS), explaining how they enable efficient quantum representations of near-area-law quantum states. %Next, we highlight the importance of using a proper warm start in all numerical simulations. 
Additionally, we provide an analytical proof of Theorem~I.1.
\subsection{Non-Hermitian physics: a motivating example}
We argued in the main text that a physical way to motivate non-Hermitian physics is from open quantum system dynamics. Usually, the Linbladian equation is used to describe the Markovian evolution of a system interacting with a thermal bath~\cite{manzano2020short}:

\begin{align}
    \frac{d\rho}{dt} &= -i[H,\rho] +\sum_i\gamma_i(L_i\rho L^\dag_i -\frac{1}{2}\{L^\dag_iL_i,\rho\})\\
     &= -i(H_{\rm eff}\rho - \rho H_{\rm eff}^\dag) + \sum_i\gamma_i L_i\rho L^\dag_i. \label{eq: linbladian}
\end{align} 
Here, the system density matrix $\rho$ evolves under $H$, the system Hamiltonian, and $\{L_i\}$ is a set of jump operators corresponding to the dissipations due to the bath. In the second line, we have redefined $H_{\rm eff} = H - \frac{i}{2}\sum_i\gamma_i L^\dag L$. 

The last term in Eq.~\ref{eq: linbladian}, $\sum_i\gamma_i L_i\rho L^\dag_i$, describes quantum jump processes that can be associated with a measurable physical quantity, such as a spontaneous emission of a photon. The stochastic time evolution can be classified into different quantum trajectories depending on the number of quantum jumps. Now, if we \emph{postselect} on the absence of quantum jumps, we end up with a Schrodinger-like time evolution with a non-Hermitian Hamiltonian: 
\begin{align}\label{eq: schrodinger}
    \frac{d\rho}{dt} 
     &= -i(H_{\rm eff}\rho - \rho H_{\rm eff}^\dag)
\end{align}

\subsection{Review of sequential circuits}

In a quantum system consisting of $n$ qubits, the sequential quantum circuits are defined as follows:
\begin{definition}[$\tau$-sequential quantum circuit]
    Given a local universal gate set $ \mathcal{G} \subseteq U(4)$, a quantum circuit consisting of local gates in the set is called a $\tau$-sequential quantum circuit if each qubit is at most acted upon by $\tau$ gates in the circuit.
\end{definition}
Various interesting examples emerge at $\tau\ll n$. For example, when $\tau$ is a constant, from the definition, we can bound its computational power:
\begin{enumerate}
    \item Strictly contains all constant depth circuits;
    \item Is strictly contained in the set of linear depth circuits.
\end{enumerate}
The first inclusion holds because any constant-depth circuit is a sequential circuit and sequential circuits can prepare long-range correlates that cannot be accessed from constant-depth circuits such as the GHZ state. The second bound can be deduced from a) these sequential circuits are at most depth $O(n)$, as their circuit volume is $O(n)$ by definition and b) they cannot generate volume law entanglement states, which are permitted in linear depth circuits.

As such, sequential circuits are of particular interest in the NISQ era: Compared with a constant depth circuit, it has more representation power as its light cone can cover the whole system; compared to a `dense' constant depth circuit, it permits a compressed representation yet accurately captures certain classes of states, such as thermal states of locally interacting spin chains~\cite{zhang2022qubit}, electronic mean-field ground states~\cite{niu2022holographic}, chiral topological orders~\cite{chen2024sequential2}, maps between gapped phases~\cite{chen2024sequential}, projected entangled-pair states~\cite{wei2022sequential}, etc.; they can be combined with adaptive circuits to improve their representation power~\cite{lu2022measurement,foss2023experimental,malz2024preparation}, and experimental proposals on cQED devices have been proposed~\cite{osborne2010holographic,zhang2024sequential}.

\paragraph{Matrix product states} 
The discussion of sequential circuits originated from a celebrated quantum state compression method: MPS. Any pure quantum state $|\Psi\>$ can be expressed as a matrix-product state by sequentially performing Schmidt decompositions between local sites, turning wave-function amplitudes into a 1D tensor train~\footnote{In physical states considered in this work, we focus on open boundary 1D chains with sites $x\in\{1,2,\dots n\}$, although the method can be generalized to infinite systems and larger on-site dimensions.}:
\begin{align}
	|\Psi\> = \sum_{\{s_x\}_{x=1}^n} \ell^T A^{s_1}A^{s_2}\dots |s_1s_2\dots \>.
\end{align} 

In this context, $ s_x \in \{0,1\} $ labels the basis states for site $ x $, and $ A^{s_x} $ are $\chi \times \chi$ matrices. The vectors $\ell$ are $\chi$-dimensional and determine the left boundary conditions. The memory and computational cost of Matrix Product State (MPS) computations scale with the bond dimension, $\chi$, which is lower bounded by the bipartite entanglement entropy across a cut through the bond.

Although representing a generic quantum state, such as the Haar random state still requires the bond dimension $\chi = e^{O(n)}$, many quantum states of physical interests such as 1D short-range correlated, area-law entangled states~\cite{hastings2007area, foss2021entanglement, niu2022holographic}, or thermal mixed states~\cite{hauschild2018finding, zhang2022qubit}, it is possible to truncate the entanglement spectrum to $\chi = O(1)$ independent of system size, allowing for efficient classical simulations of 1D gapped ground states.

Higher-dimensional systems can also be represented as MPS by treating them as a 1D stack of $(d-1)$-dimensional cross-sections. Yet even for area-law entangled states the required bond dimension grows exponentially with the cross-sectional area, making classical simulation impractical. In such cases including classically intractable cases such as 2D and 3D ground-states with symmetry-breaking~\cite{niu2022holographic} or (non-chiral) topological order~\cite{soejima2020isometric}, and finite-time quantum dynamics from any qMPS~\cite{foss2021holographic,chertkov2021holographic}, applying tensor network methods on a quantum computer could offer a significant advantage. 
\paragraph{Sequential quantum circuits generated by MPS (qMPS)}
Properties of any MPS in right-canonical form (RCF)~\cite{perez2006matrix} can be measured by sampling on a quantum computer and implementing its transfer-matrix as a quantum channel~\cite{gyongyosi2012properties} acting on $1$ ``physical" qubit and $q=\log_2\chi$ bond qubits~(see Fig.~2 for a graphical representation). 
Each tensor $A$ is then embedded as a block of larger unitary operator $U_A$ acting on a fixed initial state $|0\>$ of the physical qubits. After the application of $U_A$, the physical qubits can be measured in any desired basis while entanglement information is consistently stored on the bond qubits. 
The process is then repeated for each site in sequence from left to right. In this way, one can measure any product operator of the form $\prod_{x=1}^n \mathcal{O}_x$, which forms a complete basis for general observables. 
Crucially, once measured, the physical qubit for site $x$ can be reset to $|0\>$ and reused as the physical qubits for site $x+1$, enabling a small quantum processor to achieve quantum simulation tasks with sizes far larger than the number of qubits available ~\cite{foss2021holographic}.

To summarize, the qMPS procedure for sampling an observable of the form $\<\psi|\prod_{x=1}^n O_x|\psi\>$ is:
 \begin{algorithm}[H]
	\caption{Generating sequential quantum circuits from MPS} 
 \begin{algorithmic}[1]
\State Prepare the bond qubits in a state corresponding to the left boundary vector $\ell$. This can be done with up to $\log \chi$ ancilla qubits
 \For {$x=1\dots n$}
\State Perform a synthesized quantum circuit $U_A$ at site $x$, entangling the physical and bond qubits. 
\State Measure the physical qubit in the eigenbasis of $O_x$ and weight the measurement outcome by the corresponding eigenvalue of that observable.
\State Reset the physical qubit for site $x$ in a fixed reference state, $|0\>$.
\EndFor
\State Discard the bond qubits.
 \end{algorithmic}
\end{algorithm}

Moreover, the entanglement spectrum of the bond-qubits in between sites $x$ and $x+1$ coincides with the bipartite entanglement spectrum of the physical MPS at that entanglement cut, further enabling measurement of non-local entanglement observables, as recently demonstrated experimentally~\cite{foss2021holographic}. The left boundary vector $\ell$ is prepared by a unitary circuit acting on the bond space. For an open chain, there is no need to specify the right boundary condition as no entanglement is to be stored beyond $x = n$, and thus the bond qubits are disentangled with the physical qubit and can be traced out. When all $U_A$ matrices are set to be the same, the lack of right boundary conditions in the formalism describes a semi-infinite wire~\cite{foss2021holographic}. 

%qMPS methods enable qubit-efficient access to a subset of MPS with exponentially large bond dimension (in qubit number), including classically intractable cases such as 2D and 3D ground-states with symmetry-breaking~\cite{niu2022holographic} or (non-chiral) topological order~\cite{soejima2020isometric}, and finite-time quantum dynamics from any qMPS~\cite{foss2021holographic,chertkov2021holographic}.

By exploiting the efficient compression~\cite{orus2019tensor} of physically interesting states, such as low-energy states of local Hamiltonians, qTNS methods enable simulation of many-body systems relevant to condensed-matter physics and materials science with much smaller quantum memory than would be required to directly encode the many-body wave-function.

\paragraph{Gaussian MPS}%
While MPS is a generic approach to quantum state compression, a subclass of MPS, the Gaussian MPS (GMPS), explores the near-area law entanglement of free fermion systems, enabling even more efficient representations. The Hamiltonian of a Gaussian (i.e. non-interacting) fermion system with $ n $ sites has the form 
$$ H = \sum_{i,j=1}^{n} c^\dagger_i h_{ij} c^{\vphantom\dagger}_j $$ 
This system can be fully characterized by its $ n \times n $ two-point Green's function $ G_{ij} = \langle c^\dagger_i c^{\vphantom\dagger}_j \rangle $ with highly degenerate eigenvalues of either 0 (unoccupied) or 1 (occupied sites). Crucially, $G_{ij}$ remains invariant under any unitary transformation as long as one does not mix occupied and unoccupied states.

The compression scheme presented by Fishman and White~\cite{fishman2015compression} exploits this unitary invariance by progressively disentangling local degrees of freedom in blocks of $ B $ adjacent sites. Moreover, the ground states of Gaussian fermionic systems have near-area-law entanglement. Therefore, choosing block size $ B $ large enough, most of the block eigenvalues must be exponentially close to 0 or 1. This enables a sequence of operations that compresses the correlation matrix:
 \begin{algorithm}[H]
	\caption{GMPS compression} 
 \begin{algorithmic}[1]
 \For {$x=1\dots n$}
\State Start with $G_{xx}$, examine the next $B \times B $ sub-matrix of $ G $ to the bottom right.
\State Identify the eigenvector with an eigenvalue closest to 0 or 1. 
\State Apply a series of $ 2 \times 2 $ single-particle unitary rotations $\prod_{\alpha=1}^{B-1} u_{x,\alpha}^\dagger $ on $G$ to move this eigenvector to the first site of the block. Denote the resultant correlation matrix as $G'$.
\State Set $G\leftarrow G'$
\EndFor
 \end{algorithmic}
\end{algorithm}
%The process begins with the upper-left $ B \times B $ block of $ G $, where the eigenvector with an eigenvalue closest to 0 or 1 can be identified. A series of $ 2 \times 2 $ single-particle unitary rotations is then performed to move this eigenvector to the first site of the block, effectively separating the site from the rest of the system. This procedure is iterated over the remaining $(n-1) \times (n-1)$ sites until the Green's function is approximately diagonalized.

%The composition of these basis rotations results in an $ n \times n $ unitary matrix, $ u^\dagger = (\prod_{\alpha=1}^{(B-1)(N_o - \frac{B}{2})} u_\alpha)^\dagger $, which is made up of $ 2 \times 2 $ single-particle unitaries indexed by $ \alpha $. This unitary approximately diagonalizes the Green's function. 

In each iteration, the compression algorithm separates a site from the rest of the system. At the end of execution, Green's function is approximately diagonalized. Reversing the process allows one to transform a product state into the desired Gaussian fermionic state. 

\paragraph{GMPS as a sequential quantum circuit} To simulate fermions on a quantum computer, these single-particle operations $u_\alpha$ in the fermionic language, can be converted into a circuit for the many-particle Hilbert space of size $ 2^n $ by replacing each $ 2 \times 2 $ unitary $ u_{x,\alpha}^\dagger $ with a two-qubit gate: $$ U_{x,\alpha} = \exp[\sum_{ij} c^\dagger_i (\log u_{x,\alpha}^\dagger)_{ij} c^{\vphantom\dagger}_j] = \exp[\sum_{ij} \sigma^+_i (\log u_{x,\alpha}^\dagger)_{ij} \sigma^-_j] .$$ 

As pictured in Fig.~1(b), the resulting ladder circuit $ U = \prod_{x,\alpha} U_{x,\alpha} $ can be interpreted as a $B$-sequential quantum circuit generated by MPS (Fig.~2) with bond dimension $ \chi = 2^B $ whose causal cone slightly differentiates from generic non-Gaussian (q)MPS of the same bond dimension. In the 1D Hamiltonian considered in Eq.~5 and $J_z = 0$, we numerically find that it suffices to choose block size $B = 3$ to prepare the ground state to energy infidelity $<1\%$ on a $n=18$ chain. 

Directly preparing an arbitrary Gaussian state with a ladder circuit on $ n $ qubits requires $ O(n^2) $ two-qubit gates (as shown in~\cite{arute2020hartree}). To compare, a compressed GMPS ground state can be prepared with $ O(nB) $ two-qubit gates acting on $ O(B) $ qubits when implemented sequentially. The efficiency of this compression depends on the block size $ B $ needed for accurate state approximation. 

Numerical evidence and entanglement-based arguments suggest that for ground states of local Hamiltonians in 1D systems of length $ L $ with a target error threshold $ \epsilon = 1 - \frac{1}{L} \sum_{i,j} |G_{ij}^{(\text{GMPS})} - G_{ij}| $, the required block size $ B $ scales as $ \log \epsilon^{-1} $ for gapped states or $ \log L \log \epsilon^{-1} $ for gapless metallic states. In Niu et al.~\cite{niu2022holographic}, these results are extended to $ d $-dimensional systems, where $ B $ generally scales with the bipartite entanglement entropy $ S(L) $:

\begin{align}
B\sim \begin{cases}
 L^{d-1}\log \epsilon^{-1} & \text{gapped} \\
  L^{d-1}\log L\log \epsilon^{-1} & \text{gapless} 
  \end{cases}
\end{align}

This result holds even for topologically non-trivial Chern band insulators that have an obstruction to forming a fully localized Wannier-basis. Compared to standard adiabatic state-preparation protocols, this method dramatically reduces the number of qubits required ($L^{d-1}B$ vs. $L^d$) to implement the GMPS on a quantum computer~\cite{niu2022holographic}; the compressed state can be then used as an initial state for quantum quench or adiabatic state preparations. 

\subsection{Analytical proof of Thm.~I.1 and implication in complexity theory}

In this section, we provide analytical proof of Thm.~I.1, which says there exists a non-Hermitian $H$ consisting of single qubit terms only and an initial state $\ket{\psi_{0}}$ where the dynamics for $\Theta(\log (n))$ time can be exponentially hard to approximate with a quantum circuit. One explicit example is to take $\ket{\psi_{0}}$ to be a Haar random state $\ket{\psi_{\rm Haar}}$ and $H_{z} =-i\sum_i Z_i$. Our hardness of approximation result comes from two facts:
\begin{enumerate}
    \item Applying the dynamics generated by $H_{z}$ for evolution time\ $t = \Theta(\log(n))$ and measure in the computational basis allows one to distinguish a Haar random state from a maximally mixed state $\rho_m = \mathbbm{I}/d$, where $d = 2^n$.
    \item With overwhelmingly high probability, distinguishing a state sampled from Haar random ensemble from $\rho_m $ is exponentially hard.
\end{enumerate}
The task can be formalized into a decision problem: 
\begin{problem}[Distinguishes a Haar random state from a maximally mixed state]\label{prob: Haar}
An oracle $\mathcal{O}$ prepares copies of a fixed quantum state on a $n$-qubit quantum register that is promised to be either a maximally mixed state or a pure state sampled Haar randomly. Decide which is the case with as few queries as possible.
\end{problem}
We first prove that time evolution with a single qubit dissipation channel can efficiently solve Prob.~\ref{prob: Haar}.
\begin{lemma}[Single qubit dissipation distinguishes a Haar random state from a maximally mixed state] \label{lem:easy}
With $O(k)$ queries and probably $1-e^{-O(k)}$, a quantum state sampled Haar randomly can be distinguished from a maximally mixed state by time evolving the state for $t = \Theta(\log(n))$ with $H_z$. 

%\begin{align}\label{eq:trace_distance}
%    \Tr{\ketbra{\psi_{\rm Haar}(t)}-\rho_m(t))}| > 0.1. 
%\end{align}
%$\rho(t)$ and $\ket{\psi(t)}$ denotes normalized density matrices and states after time evolution.
\end{lemma}
\begin{proof}
    We consider the output distributions of a Haar random state before and after time evolution, measured on the computational basis $$p_s := \left|\braket{s}{\psi_{\rm Haar}}\right|^2,\ q_s:= \left|\braket{s}{\psi_{\rm Haar}(t)}\right|^2$$
    $p_s$ is known to fluctuate around its mean value i.e. the distribution of $p_s = 1/d$, but with a variance exponentially small in $n$. This can be seen from the numerical experiments on the left panel of Fig.~\ref{fig:output}. As a result, it is exponentially hard to distinguish a Haar random state and a maximally mixed state with a naive computational basis measurement.

    The effect of the time evolution generated by $H_z$ is that it exponentially re-distributes weights of all output strings according to their Hamming weight $w_s$, or the number of 0's in $s$. This can already be told from a single qubit case: $$e^{-Zt} (a\ket{0}+b\ket{1}) = ae^{-t}\ket{0} + be^t\ket{1}$$.
    
    We begin by finding $t$, such that after applying the evolution generated by $H_z$ to the maximally mixed state, the output weight on the $s=1^n$  can be a constant $c$. By setting:
    \begin{align}\label{eq:weight}
        \left(\frac{e^{2t}}{e^{-2t}+e^{2t}}\right)^{n} = c
    \end{align}
    
    and solving for $t$, we get 
    \begin{align}
        t &= -\frac{1}{4} \log(-c^{-1/ n} (-1 + c^{1/ n}))\\
          & = -\frac{1}{4} \log(c^{-1/ n}-1)\\
          & = -\frac{1}{4} \left[\log(-\frac{1} {n} \log (c))+O(n^{-1}))\right]\\
          & = \Theta(\log(n))
    \end{align}
    What would the output distribution look like for a Haar random state after applying the same time evolution? The normalized weight is just %For states sampled from Haar randomly, the output weight on any fixed string (in this case, $s = 1^n$) is known to follow the Porter-Thomas distribution~\cite{}: 
    %$${\rm PT}(p_s) = 2^n e^{2^n p_s}$$

    \begin{align}
        q_{1^n}  &=\frac{p_{1^n} e^{2n}}{\sum_s p_s e^{4w_s-2n}}\\
       &=\frac{p_{1^n} e^{2n}}{\sum_{i=0}^{n}\sum_{w_s = i}p_s e^{4i-2n}}\\
       &=\frac{p_{1^n} e^{2n}}{p_{1^n} e^{2n}+ p_{0^n} e^{-2n}+\sum_{i=1}^{n-1}2^{-n}\begin{pmatrix}
           n\\
           i
       \end{pmatrix}p_s e^{4i-2n}} \\
       &:=\frac{p_{1^n}}{p_{1^n}+2^{-n}\eta }
       \label{eq:weight_haar}
    \end{align}
    %$\Tr{\ketbra{\psi_{\rm Haar}(t)}-\rho_m(t))}| > 0.1$ 
    In the second from last step, we use the fact that each entry of a Haar state vector consists of two standard normal variables and the weights on each string should follow the Porter-Thomas distribution: ${\rm PT}(p_s) = 2^n e^{-2^n p_s}$, and a sum over $\poly(n)$ terms would converge to its mean value. We may also ignore the $p_{0^n} e^{-2n}$ in the denominator as it is exponentially small in $n$. 

    Under the assumption of Eq.~\ref{eq:weight}, we have $$ \frac{1}{1+\eta}= c$$

     W.l.o.g., setting $c = 1/2$ gives $\eta  = 1$; therefore the Eq.~\ref{eq:weight_haar} returns $$q_{1^n} = \frac{2^{n}p_{1^n}}{2^{n}p_{1^n}+1},$$
     
     This means the variance of this distribution is now a constant independent of $n$. For example, the probability of $P[q_{1^n}\geq0.6]$ is 
     \begin{align}
         P[q_{1^n}\geq0.6] &= P[q_{1^n}\geq0.6]\\
         & = P\left[\frac{2^{n}p_{1^n}}{2^{n}p_{1^n}+1}\geq0.6\right]\\
         & = P\left[p_{1^n}\geq 1.5\times2^{-n}\right]\\
         & = \int_{1.5\times2^{-n}}^\infty 2^n e^{-2^n p} dp\\
         & \approx 0.223
     \end{align}
     As evident from Fig.~\ref{fig:output}, with constant probability, a measurement in the computational basis now allows efficient distinguishing between the Haar random state and the maximally mixed state after the time evolution~\footnote{Notice that, if we choose $t = \poly(n)$, the two states are once again hard to be distinguished because they both purifies to a product state.}. This probability can be boosted to $1-e^{-k}$ by randomly applying a layer of $X$ gates before the time evolution (namely, randomly selecting a string $s$ to amplify and probe its amplitude) and repeating $O(k)$ times. 
\end{proof}
\begin{figure}
    \centering
        \centering
        \includegraphics[width=.75\linewidth]{output_weight.pdf}
    \caption{Output weight on the $1^n$ string of 100 Haar random states before (left) and after (right) applying time evolution generated by $H_z$ for a time length $t$; the output weight of a maximally-mixed state, $\rho_m$ is plotted with a black line for comparison. In both cases, the output weight of the Haar states fluctuates around the output value of the maximally mixed state. However, the variance on the left plot is exponentially small in $n$, and the variance after the non-Hermitian time evolution becomes a constant, allowing efficient distinguishment between the Haar random state and the maximally mixed state. This cannot be accomplished with any local quantum channels such as the amplitude damping channel~\cite{fefferman2023effect}.}
    \label{fig:output}
\end{figure}
In the unitary setting, the minimum number of gates required to implement a measurement $M$ that can \emph{distinguish} a given quantum state $\ket{\psi}$ from the maximally mixed state $\rho_m$ to a certain resolution $\delta$ is defined as the \emph{strong} state complexity. Formally, let $\beta(r, \ket{\psi})$ be the maximum bias with which $\ket{\psi}$ can be distinguished from the maximally mixed state using a circuit with at most $r$ gates from the gate set $\mathcal{G} \subseteq U(4)$:
\begin{align}
    \beta(r,\ket{\psi}) =& \text{ max}\ |\Tr{M(\ketbra{\psi}{\psi}-\rho_m)}|\\
    &\text {s.t. }M \text{ can be implemented with at most } r \text{ gates}
\end{align}

\noindent Then the strong state complexity is defined as:
\begin{definition}[Strong state complexity]\label{def:sc_state}
    For a given $r \in \mathbb{N}$ and $\eta \in (0, 1)$, a pure state $\ket{\psi}$ has strong $\eta$-state
complexity at most $r$ if and only if
$\beta(r,\ket{\psi}) \geq 1 - 1/d - \delta$.
We denote this $\mathcal{C}_\delta (\ket{\psi}) \leq r$.
\end{definition}
The $1/d$ in the definition comes from that any pure state can be trivially distinguished from a maximally mixed state with trace distance $1/d$. For Haar random pure states, it has been shown that the vast majority of them have exponentially high strong state complexity:
\begin{proposition}[Strong state complexity of Haar random states]
    The probability that $\ket{\psi_{\rm Haar}}$ has strong circuit complexity less than $r$ is 
    $$\Pr [\mathcal{C}_{\delta}(\ket{\psi_{\rm Haar}})\leq r ] \leq 4.0144 d(n + 1)^r|\mathcal{G}|^r \exp{-\frac{d(1-\delta^2)}{9\pi^3}}$$
\end{proposition}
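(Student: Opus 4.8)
The plan is to prove this by a union bound over all measurements implementable with at most $r$ gates, exploiting that any \emph{single} fixed measurement distinguishes $\rho_m$ from only a super-exponentially (in $d$) small fraction of pure states. By Definition~\ref{def:sc_state}, the event $\mathcal{C}_\delta(\ket{\psi_{\rm Haar}})\le r$ is the existential statement that \emph{some} admissible measurement $M$ achieves bias $|\Tr[M(\ketbra{\psi_{\rm Haar}}{\psi_{\rm Haar}}-\rho_m)]|\ge 1-1/d-\delta$. I would bound its probability by $\sum_M \Pr_{\psi_{\rm Haar}}[\cdots]$, so the proof splits into (i) a tail bound for a fixed measurement and (ii) a count of the admissible measurements.

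For (i), I write an admissible measurement as a circuit $U$ of at most $r$ gates followed by a computational-basis measurement, and use unitary invariance of the Haar measure: since $U\ket{\psi_{\rm Haar}}$ is again Haar-distributed, the best bias attainable with this circuit equals the total-variation distance $\mathrm{TV}(p_\phi,u)$ between the output distribution $p_\phi(s)=|\braket{s}{\phi}|^2$ of a Haar state $\ket\phi$ and the uniform distribution $u$. Because the maximal bias over \emph{all} measurements is exactly $1-1/d$ (attained by the rank-one projector onto the state), the threshold $1-1/d-\delta$ is a near-extremal large-deviation event for $\mathrm{TV}$. I would then invoke L\'evy's lemma on the sphere $S^{2d-1}\subset\mathbb{C}^d$: viewed as a function of $\ket\phi$, $\mathrm{TV}(p_\phi,u)$ is Lipschitz, so $\Pr[\mathrm{TV}\ge 1-1/d-\delta]\le 2\exp(-\Omega(d(1-\delta^2)))$, with the universal constant $9\pi^3$ of L\'evy's lemma appearing in the exponent. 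Converting the raw deviation into the exponent $d(1-\delta^2)/(9\pi^3)$ requires carefully tracking the Lipschitz constant of $\mathrm{TV}$ and the centering $\mathbb{E}[\mathrm{TV}]$; this is routine but is where the precise form of the stated bound is pinned down.

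For (ii), each of the $r$ gates carries a choice of which qubits it acts on (at most $n+1$ in the convention of the cited bound) and which element of $\mathcal{G}$ it equals, giving at most $(n+1)^r|\mathcal{G}|^r$ distinct circuits. The residual freedom --- the two-sided factor of $2$ in L\'evy's lemma, a net over the acceptance threshold, and (if $\mathcal{G}$ is not already finite) an $\varepsilon$-net replacing the continuous gate set by a finite one fine enough that the bias moves by $o(\delta)$ --- is absorbed into the prefactor $4.0144\,d$. A union bound over this enumeration then yields $\Pr[\mathcal{C}_\delta(\ket{\psi_{\rm Haar}})\le r]\le 4.0144\,d\,(n+1)^r|\mathcal{G}|^r\exp(-d(1-\delta^2)/(9\pi^3))$.

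I expect the main obstacle to be the tension between the two ingredients. The counting factor grows like $\exp(r\log(n|\mathcal{G}|))$, whose base is only polynomial in $n$, whereas each measurement's failure probability must decay like $\exp(-\Omega(d))=\exp(-\Omega(2^n))$ for the union bound to survive out to $r=\poly(n)$. The delicate point is therefore to make the concentration step genuinely scale with the Hilbert-space dimension $d$ rather than with $n$, and to do so while enumerating measurements cheaply --- in particular by folding the optimization over acceptance sets into the single Lipschitz functional $\mathrm{TV}(p_\phi,u)$ rather than union-bounding over all $2^d$ subsets, which would overwhelm the concentration gain. Getting this bookkeeping to collapse to the clean exponent $d(1-\delta^2)/(9\pi^3)$, following the counting-plus-concentration argument of Brand\~ao et al.~\cite{brandao2021models}, is the crux.
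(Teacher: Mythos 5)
Your overall skeleton --- a union bound over the at most $(n+1)^r|\mathcal{G}|^r$ enumerable circuits, times a concentration bound for each fixed measurement --- is indeed the counting-plus-concentration argument of Brand\~ao et al.\ that the paper invokes (the paper itself offers no more than that citation and a one-sentence summary). However, your implementation of the concentration step has a genuine gap: you fold the optimization over acceptance sets into the single functional $\mathrm{TV}(p_\phi,u)$ and then apply L\'evy's lemma to it. L\'evy's lemma concentrates a Lipschitz function around its \emph{mean}, and the mean of this functional is not small: by Porter--Thomas statistics, $\mathbb{E}_\phi[\mathrm{TV}(p_\phi,u)]=\tfrac12\sum_s\mathbb{E}\,|p_\phi(s)-1/d|\to 1/e\approx 0.37$, a dimension-independent constant. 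The best your route can give is therefore $\Pr[\mathrm{TV}\ge 1-1/d-\delta]\le \exp(-\Theta(d\,(1-1/e-\delta)^2))$, which does not reproduce the claimed exponent $d(1-\delta^2)/(9\pi^3)$ and is vacuous whenever $\delta\ge 1-1/e$. The ``centering'' you dismiss as routine is precisely where the argument breaks.

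The deeper problem is that treating an arbitrary acceptance set as free classical post-processing changes the measurement model into one in which the proposition is \emph{false}. Given any Haar state $\ket{\phi}$, the zero-gate ``measurement'' that reads out the computational basis and accepts on $S_\phi=\{s: p_\phi(s)>1/d\}$ already achieves bias $\mathrm{TV}(p_\phi,u)\approx 1/e$ with overwhelming probability, so every Haar state would satisfy $\mathcal{C}_\delta(\ket{\phi})=0$ for $\delta>1-1/e$, contradicting the stated bound, which still decays exponentially in $d$ in that regime. In the model underlying the proposition (and the cited proof), the acceptance projector is part of the enumerated circuit data --- roughly, one accepts on a fixed outcome of designated output qubits, which is where the prefactor $d$ (the choice of accepted basis state) comes from --- and concentration is applied, for each fixed POVM element $M$, to a functional such as $\ket{\psi}\mapsto\|M^{1/2}\ket{\psi}\|$. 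That functional is $1$-Lipschitz, and its mean, at most $\sqrt{\Tr(M)/d}$, is pinned near zero by the distinguishing event itself, since $\Tr(M\psi)\le 1$ forces $\Tr(M)/d\le\delta+1/d$. This is what makes essentially the full deviation available in L\'evy's lemma and yields a bound of the stated form; with your TV-based reduction, no amount of bookkeeping can recover it.
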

The proof given by~\cite{brandao2021models} essentially comes from a counting argument that any fixed measurement can only distinguish a small number of states, and thus to distinguish the vast majority of Haar random states, the complexity of measurement must grow exponential in $n$. Even for $1-\delta^2 = 1/\poly(n)$, this probability of distinguish remains exponentially small in $n$ when $r = \poly (n)$. This means 
\begin{lemma}[Hardness of distinguishing a Haar random state with a $\mathsf{BQP}$ machine]\label{lem:hard}
    With overwhelming probability, a $\mathsf{BQP}$ machine making polynomial queries to $\mathcal{O}$ cannot solve Prob.~\ref{prob: Haar}.
\end{lemma}
Combining Lem.~\ref{lem:easy} and Lem.~\ref{lem:hard} completes the proof of Thm.~I.1.
Further, as we explain in Appx.~\ref{app:experimental}, each single-qubit dissipation channel can be implemented with a two-qubit gate and post-selection on the single ancillary qubit. It turns out that the result we have proved can be summarized in a computational complexity language: With overwhelming probability, Prob.~\ref{prob: Haar} can be solved by a quantum circuit at merely constant depth when postselection is allowed (denoted as $\mathsf{PostQNC_0}$), but also with overwhelming probability, it cannot be solved by a polynomial-size quantum circuit. Namely,
\begin{corollary}[Oracle separation]
    There exists a quantum oracle $\mathcal{O}$ (as defined in Prob.~\ref{prob: Haar}) such that $\mathsf{PostQNC_0}^\mathcal{O}\not\subseteq\mathsf{BQP}^\mathcal{O}$;  relative to the same oracle, $\mathsf{BQP}^\mathcal{O} \subset\mathsf{PostBQP}^\mathcal{O}$.
\end{corollary}
%\YZ{finish proof}
\section{Additional details on non-Hermitian VQE with variance minimization}\label{app:numerical}
\subsection{Failure of energy minimization due to non-Hermicity}
In standard VQE, energy minimization is performed with the loss function:
$$E = \braket{\psi({\bm\theta})|H}{\psi({\bm\theta})}$$

This cost function would fail when the system becomes non-Hermitian. Consider a two-level system with right eigenvectors $\{\ket{\psi^+},\ket{\psi^-}\}$ and eigenvalues $\{E^+,E^-\}$, an energy calculation on the trial wave function $\psi = a\ket{\psi^+}+b\ket{\psi^-}$ gives:
$$E = aa^*E^+ + bb^*E^+ + a^*bE^- \braket{\psi^+}{\psi^-}+ab^*E^+ \braket{\psi^-}{\psi^+}$$

In a non-Hermitian system, the right eigenvectors are not guaranteed to be orthonormal to each other. As such, the last two terms on the right-hand side would not be 0 and the result of energy minimization is not necessarily an eigenstate of $H$. We numerically observed this effect with the dissipative Ising chain Hamiltonian, where the energy optimizer can return much lower energy than the value given by exact diagonalization calculation.

\begin{figure*}
    \centering
        \centering
        \includegraphics[width=1\linewidth]{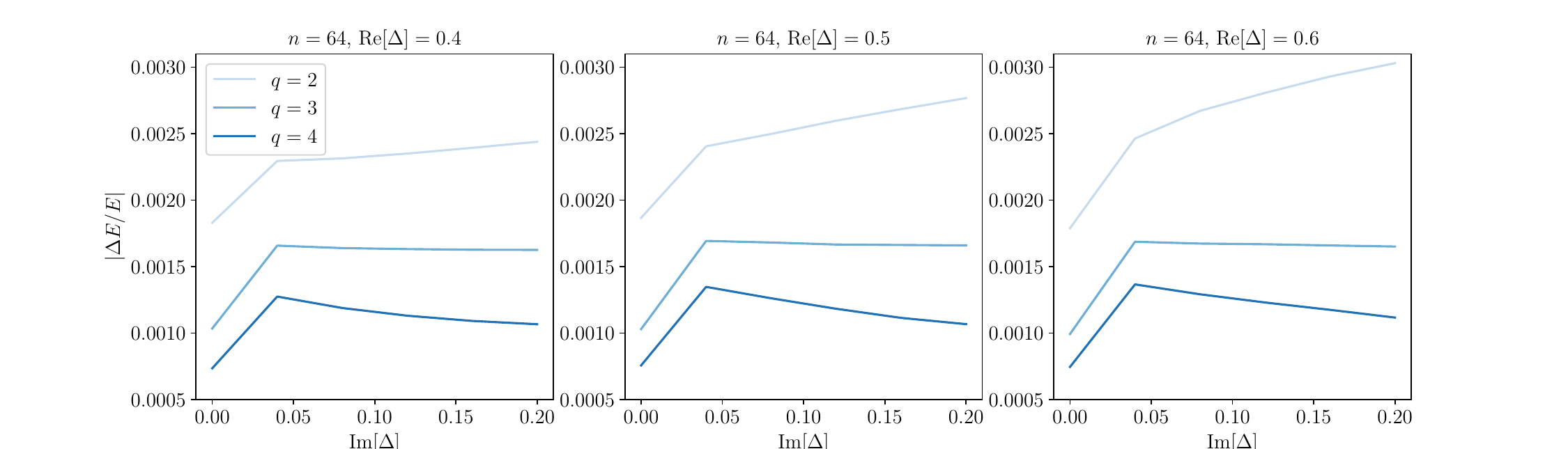}
        \caption{Energy VQE results of a $n=64$ dissipative XXZ chain using various numbers of bond qubits, compared with a $\chi = 100$ NH-DMRG with truncation error set to $10^{-10}$ as the truth value. The local circuit geometry of each $U_A$ in the sequential circuit is fixed to be a $\tau = 10$ brickwall. For each panel, we fix the real part of $\Delta$ and vary the imaginary field.}
        \label{fig:xxz}
\end{figure*}

%\subsection{Ising model}
%We provide some additional data here; First on Ising model: like many non-Hermitian models, this is a model that is very sensitive to boundary conditions; it is also very akin to the even or odd; introduce $Z_2$ symmetry
\subsection{Testing the algorithm with a dissipative XXZ model}
In the main text, we demonstrate that the energy and correlation function of a non-Hermitian Ising chain can be captured accurately with merely 2 bond qubits. To extend our study to a larger system size and bond space, we consider a more challenging task: ground state preparation of the non-Hermitian XXZ spin chain model whose Hamiltonian is  %\YZ{To do: introduce the non-Hermitian XXZ model, describe how the critical exponent can be fit through RR correlators}

\begin{align}\label{eq: xxz}
    H_{\rm XXZ} = \sum_i [X_{i}X_{i+1}+ Y_{i}Y_{i+1} + \Delta Z_{i}Z_{i+1}] 
\end{align}

where and $\Delta$ can be any complex number. Eq.~\ref{eq: xxz} was recently proposed as a prototypical dissipative Tomonaga-Luttinger (TL) liquid~\cite{yamamoto2022universal}. Next, we perform a VQE study with a dissipative XXZ model with Algo.~2. Varying the number of bond qubits in the sequential circuits while keeping the local circuit depth fixed, the results in Fig.~\ref{fig:xxz} demonstrate an improved accuracy in energy. Unlike the dissipative Ising model discussed in the main text, the energy of the XXZ chain becomes complex with even a slight imaginary perturbation in $\Delta$, and this is reflected in the increase in the energy infidelity, suggesting a rise in the hardness of representing the state. 

Comparing three panels where the real part of $\Delta$ varies, the energy infidelity reported by $q = 4$ consistently stays below $0.15\%$, whereas the $q = 2$ result reports increased error with increased imaginary field strength, seemingly suggests that the small number of bond qubits have a hard time capturing the ground state in the strong imaginary field regime.

\subsection{The importance of warm start}

Despite the success of variational solvers in tasks like energy finding and unitary compilation, they often encounter the barren plateau issue. Namely, the average gradient variance of the cost function becomes exponentially small with system size as the circuit depth increases. For local cost functions, such as a single Pauli observable, this issue can arise with a circuit depth of $\poly(n)$. For global cost functions, the challenge can be even more pronounced~\cite{cerezo2021cost}.

Nevertheless, various `warm start' strategies have been proposed and shown to be effective in optimizing PQCs. 
A barren plateau happens when one randomly initializes a deep PQC, which readily forms unitary designs~\cite{mcclean2018barren}. 
On the other hand, a warm start typically utilizes an iterative procedure where the PQC is first set to be a shallow circuit, optimized, and repeatedly fed into a deeper PQC.
In a recent study~\cite{drudis2024variational}, the authors analytically investigate an iterative initialization strategy. They demonstrate that such a variational algorithm maintains at least an inverse polynomial gradient in $n$ at each step even for global cost functions. By establishing convexity guarantees for these regions, their work suggests that polynomial-size circuits could be trainable under certain situations, making the variational approach viable for practical quantum computations.
\begin{figure}
    \centering
        \centering
        \includegraphics[width=.75\linewidth]{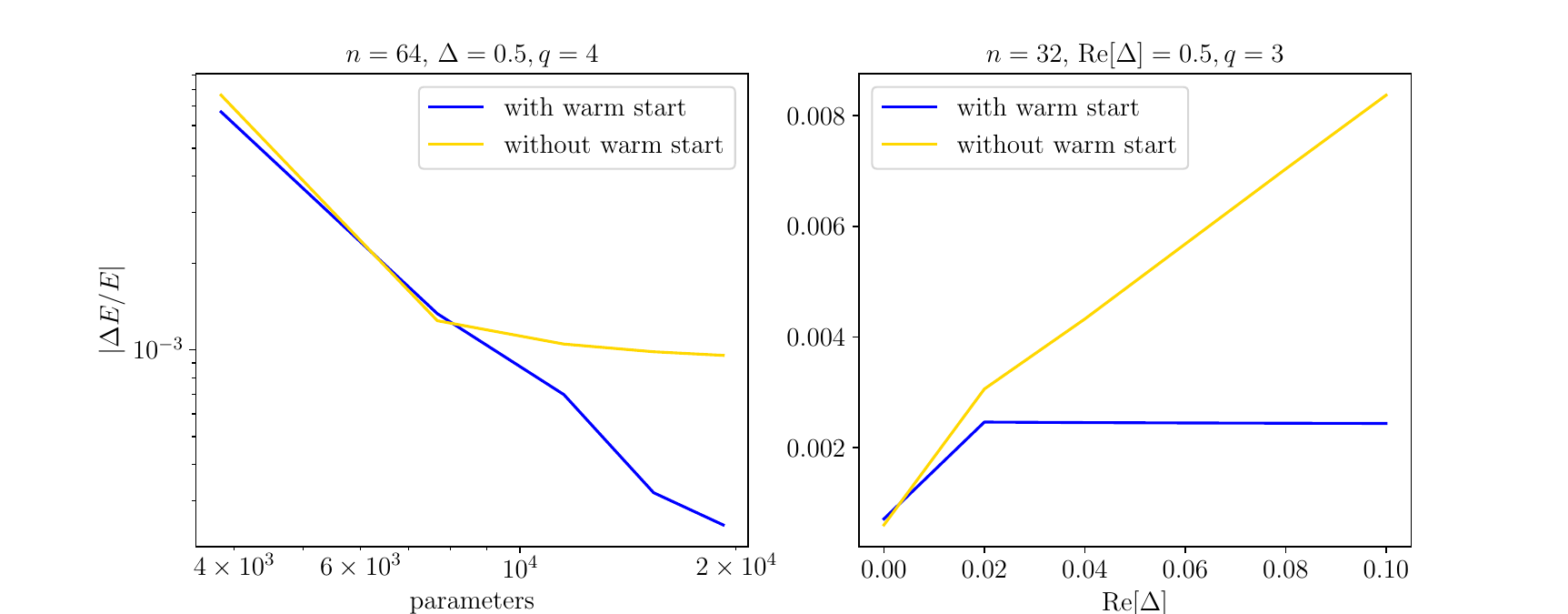}
    \caption{(a) A proper initialization method is crucial to mitigating barren plateaus in a standard Hermitian VQE algorithm. Here we compare the energy optimization result with random initialization and a sequential initialization where one gradually increases the local circuit depth, $\tau$ in the decomposition of $U_A$, and feeds the previously optimized parameters as an initialization. When the initialization succeeds, the relative energy error decays polynomically with the number of variational parameters in the circuit. (b) When the imaginary field is turned on, a good initialization guarantees the variance minimization algorithm finds the ground state, rather than an arbitrary eigenstate.}
    \label{fig:initialization}
\end{figure}
Warm start strategies are both employed in our numerical optimizations of VQC and VQE. Take VQE as an example: each sequential circuit is initially set to local depth $\tau = 2$, which is then optimized by an optimizer, Adam, that has been widely used in tensor network and neural network studies. Next, we add another gate layer of identity while maintaining all gate parameters from the previous optimization. The learning rate of the optimizer is also fine-tuned such that it decreases as the optimizer convergence to the optima at each circuit depth. We repeat this procedure until the desired circuit depth is reached. 

Fig.~\ref{fig:initialization}a gives a comparison of the optimization results using random initialization and warm start: when the initial circuit parameters are completely chosen at random, the optimizer gets stuck after the number of parameters goes beyond $\sim 8,000$; on the other hand, the energy infidelity continues to decay with polynomially with the number of free parameters in the sequential circuit, as report in the previous work~\cite{haghshenas2021variational}.

In the non-Hermitian VQE, Algo.~2, a warm start is used in the variance minimization scheme to ensure the optimizer returns the desired ground state rather than an arbitrary state. Starting with a regular VQE and gradually increasing the imaginary field, both energy and circuit parameters from a previous iteration are fed into the next step. Fig.~\ref{fig:initialization}b shows that without proper use of a warm start, the energy infidelity increases significantly with the imaginary field whereas it stays relatively constant when proper initialization is introduced.

%In this section, we emphasize the two initialization strategies to 1) avoid barren plateaus and 2)make sure we have the right eigenstate. This is a similar optimization method to the one proposed in~\cite{haghshenas2021variational,zhang2022qubit}.
\section{Experimental details}\label{app:experimental}
\subsection{Quantinuum H1 Trapped ion computer}
\label{appendix:circuit}
All experimental data in this study were collected using Quantinuum's System Model H1 trapped-ion quantum processor~\cite{pino2021demonstration}. This processor employs a quantum charge-coupled device (QCCD) architecture with 5 parallel gate zones and 20 qubit ions, enabling all-to-all connectivity and mid-circuit measurement.

The processor can implement arbitrary one-qubit gates, while multi-qubit operations are achieved by compiling into its native entangling two-qubit gate: a M\o lmer-S\o rensen gate wrapped with single-qubit dressing pulses, resulting in a phase-insensitive operation $u_{MS} = \exp[-i\frac{\pi}{4}\sigma^{z}\otimes \sigma^{z}]$ \cite{pino2021demonstration}. The one-qubit gates and two-qubit gates have typical average infidelities of approximately $\epsilon_{1q}\approx10^{-5}$ and $\epsilon_{2q}\approx 10^{-3}$, respectively. Additionally, it recently achieved a $2^{20}$ quantum volume, setting a world record among quantum processors.

However, the high performance of the processor is accompanied by a relatively low clock rate. To mitigate the large sampling overhead of variational algorithms, we chose to perform classical optimization of circuit parameters and implemented only the optimized circuit in hardware. In principle, this optimization loop could potentially be implemented on a quantum processor in the near future. We also computationally simulated the time-evolved state classically, although this could be achieved on a quantum computer using algorithms such as quantum imaginary time evolution~\cite{motta2020determining}. The total experimental cost amounted to approximately 12,000 credits, with half allocated to dynamical simulation and the remainder to ground state preparation.

\paragraph{The fSim gate} In the VQC circuit, we employ fSim gates, the most general particle number-conserving two-qubit gates to enable noise mitigation based on discarding the data with no correct total particle number. This is accomplished with the following unitary:
\begin{equation}\label{eq: fsim}
\begin{pmatrix}
e^{i(\gamma+\phi)} & 0 & 0 & 0 \\
0 & e^{i(-\gamma+\phi+\zeta)}\sin\theta & e^{-i(\chi+\gamma+\phi)}\cos\theta & 0\\
0 & e^{i(\chi-\gamma+\phi)}\cos\theta & e^{-i(\gamma+\phi+\zeta)}\sin\theta & 0\\
0 & 0 & 0 & e^{i(\gamma - \phi)}\\
\end{pmatrix}
\end{equation}
where $(\gamma, \phi, \zeta, \chi, \theta)$ are variational parameters we independently choose for each two-qubit gate. Compared to a SU(4) gate, this gate reduces the number of parameters per gate from 15 to 5 and, crucially to experimental implementation, Eq.~\ref{eq: fsim} can be synthesized with only two CNOT gates instead of three, as demonstrated in Fig.~\ref{fig:synthesis}. Moreover, as we see below, the use of fSim gates naturally allows error mitigation (EM) without any extra cost.
\subsection{Non-Hermitian fermionic quench}
\begin{figure}
    \centering
        \centering
        \includegraphics[width=.9\linewidth]{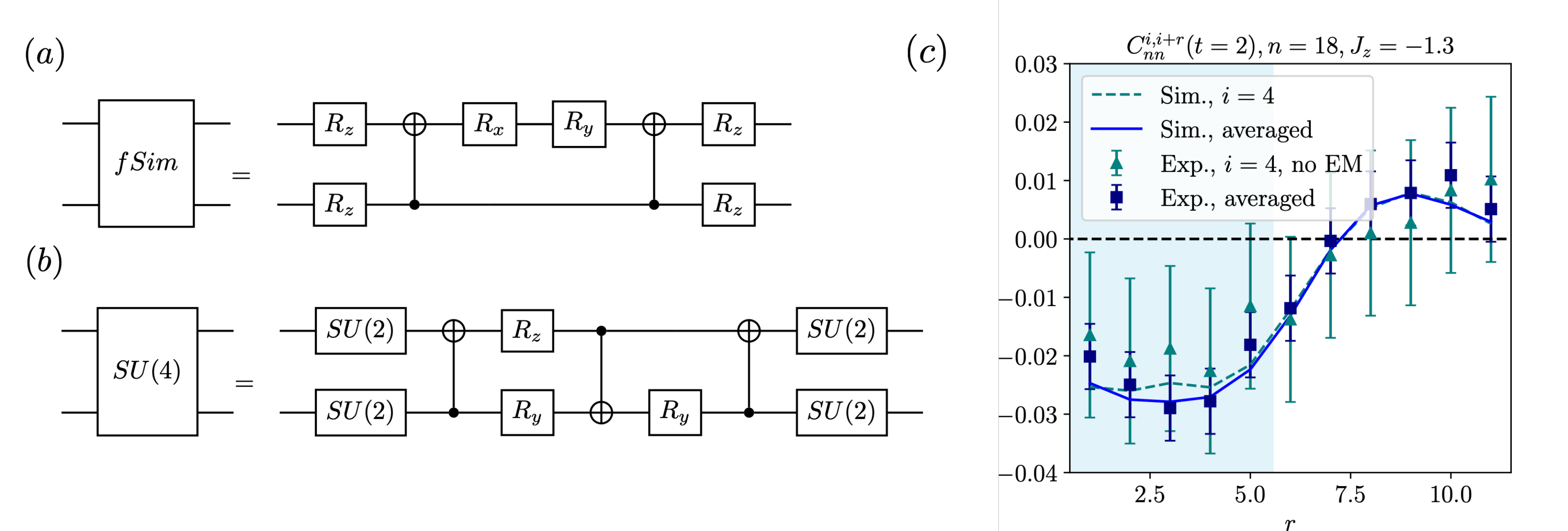}
    \caption{Synthesis of the (a) fSim gate and (b) SU(4) gate into CNOT and single-qubit gates. Each CNOT gate can be further decomposed into single-qubit rotation gates and exactly one $R_{ZZ}$ gate native to the Quantinuum's QCCD processor. (c) Comparing the density-density correlation function between averaged over the bulk sites and simply setting the initial site at $i=4$.}
    \label{fig:synthesis}
\end{figure}
\paragraph{Data processing}
%\subsection{Jordan Wigner transformation }
%The fermi; we rewrite it into a spin model with Jordan-Wigner transformation for the convenience of experimental implementation need to explain a number conserving quantum circuit
 Two methods were used to enhance the signal level in postprocessing experimental data. First, to reduce noise from the physical device, we post-select the measurement outcome in the $z$ direction on being in the right charge sector: we discard a measurement output if its Hamming weight does not equal $n/2$. This allows corrections of bit-flip errors.
 
 This is made possible because a) $C_{nn}$ translates into Pauli-Z correlators in the Jordan Wigner transformation is the density and b) the use of $U(1)$ charge-conserving gates, the fSim gates, in both the initial state preparation and variational time evolution.
 If non-charge preserving gates are used instead, because the compilation result has an error from the true circuit, we wouldn't know if the non-preservation of the total charge from the compilation error or a physical error. 
 
 Second, the correlation function is averaged over sites $i \in [3,4,5,6,13,14,15,16]$ for improved statistical significance. In Fig.~\ref{fig:synthesis}(c) we compare naively reporting the correlation function at $i=3$ to the postprocessed data that we explain above and show in the main text.

\paragraph{Resource comparison between VQC and Trotterization}
 In this section, we perform a resource comparison between the VQC method used in this work and standard Trotterization. As an example, we focus on the experimental implementation of Fig.~1(c) at $t = 2$. In quantum circuit compilation, one approach to implement $e^{-itH}$, where $H$ is a $2^n \times 2^n$ matrix, is to split up $t$ into $m$ smaller time steps $\Delta t$ and approximate the unitary propagator with a product of unitaries,
\begin{align}
    e^{-itH} \approx \exp{-i\Delta t H }^{m} \label{eq:trotter}.
\end{align}
 with $m = \frac{T}{\Delta t}$. To implement the fermionic time evolution of Eq.~5 on a qubit system, one first performs a Jordan-Wigner transformation:
\begin{align}
    H_{\rm fermi} &= \sum_{i}[\frac{1+iJ_z}{2}(c^\dag_i c_{i+1}+h.c.) - i\frac{\pi J_z}{2} n_in_{i+1}]\\
     &= \sum_{i}[\frac{1+iJ_z}{2}(S^+_{i}S^-_{i+1}+h.c.) - i\frac{\pi J_z}{2} (S^z_i+\mathbbm{I}/2)(S^z_{i+1}+\mathbbm{I}/2)]\\
\end{align}
The transformed Hamiltonian is still a sum of local terms, i.e., $H = \sum_j h_{j,j+1}$, and one can perform the following Trotter decomposition suitable for this case:

$$e^{-i \Delta t H} \approx e^{-i h_{1,2} \Delta t/2} e^{-i h_{2,3} \Delta t/2} \cdots e^{-i h_{n-1,n} \Delta t/2}
e^{-i h_{n-1,n} \Delta t/2} e^{-i h_{N-2,N-1} \Delta t/2} \cdots e^{-i h_{1,2} \Delta t/2} + O(\Delta t^3)$$
%Depending on the number of terms $m$ in the product, one can achieve different orders of approximation $p$. In this analysis, we focus on a $p = 2$ implementation that is suitable 
For the total evolution time $t = 2$ considered in the work, %as higher $p$ generally requires a much larger circuit depth to implement. 
we numerically found that a state infidelity of $1-\mathcal{F} = 0.023$ can be achieved at $m=3$, which matches the target infidelity considered by VQC.
\begin{figure}\label{fig:trotter}
    \centering
        \centering
        \includegraphics[width=.8\linewidth]{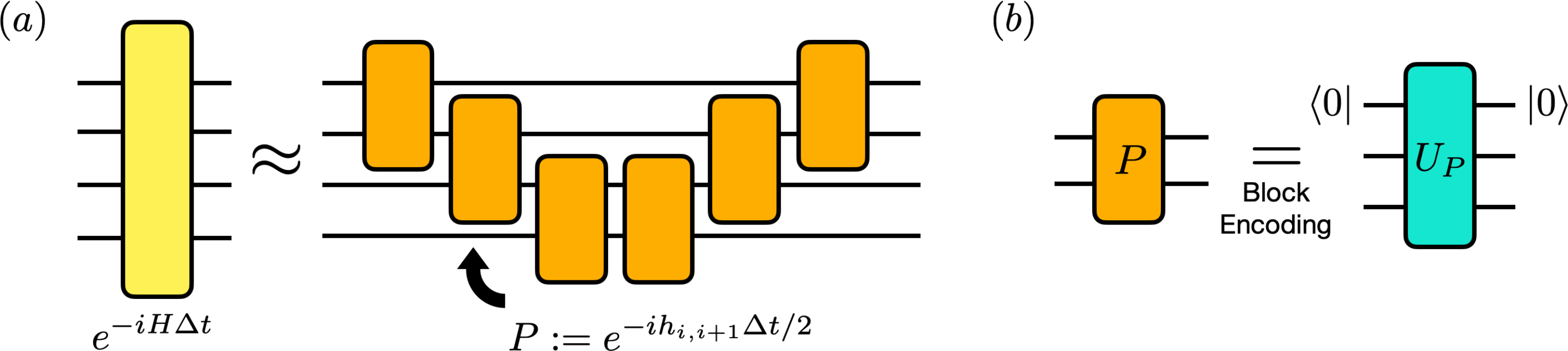}
    \caption{(a) A Trotterization ansatz considered in Appx.~\ref{app:experimental} $e^{-iH\Delta t}$ is approximated by a series of ladder-shaped nearest neighbor time evolutions $P$. (b) To implement a non-unitary operation, the typical way is to implement $P$ in a unitary, $U_P$, in doubled Hilbert space and postselected on the ancilla qubit getting 0 outcome. $U_P$ can further be synthesized into local unitary circuits with, e.g., SU(4) gates.}
\end{figure}
Next, we calculate the number of gates needed to implement the non-Hermitian time evolution of each Trotterized block on two nearest neighbor qubits: $P := e^{-i h_{i,i+1}\Delta t/2}$. Specifically, this can be done with a block encoding~\cite{low2017optimal}; namely $P$ can be embedded in a unitary with double the Hilbert space size plus post-selection: 
\begin{align}
U_P = 
\begin{pmatrix}
    \lambda^{-1/2} P& Q\\
    R &S 
\end{pmatrix}
\end{align}
where $\lambda$ is the largest eigenvalue of $PP^\dag$, which is guaranteed to be positive from the existence of a singular value decomposition $P = U\Sigma V^\dag$. To ensure mutual orthogonality on the first half columns, it suffices to choose $R = U\sqrt{\mathbb{I} - \Sigma^2} V^\dag$. The remaining columns can then be filled in with a QR decomposition. 

We numerically find that $U_P$ can be approximated to a trace distance error $\sim 10^{-4}$ with 4 $\rm{SU}(4)$ gates. 
Typically, the success probability of post-selection varies over initial states, and it averaged to $\sim 0.65$ in our 10,000 numerical tests with Haar random initial states. Putting everything together, we estimate the cost of implementing the dynamics with a Trotterization is about $2(n-1)\times m \times 4$ number of SU(4) gates, or 1224 CNOT gates (see Fig.~\ref{fig:synthesis} for decomposition of a SU(4) gate into CNOT gates), with a post-selection probability $0.65^{103}\approx5.4\times10^{-20}$. 

 Admittedly, due to the freedom in choosing $U_P$, it might be possible to trade off between gate cost and postselection rate; this however will not change the success rate's exponential small scaling in the circuit volume. On the other hand, VQC reaches target infidelity $\epsilon = 0.02$ with merely 4 layers of VQC unitary circuits, and the total gate cost for time evolution is $4\times (n-1) = 68$ fSim gates or 136 CNOT gates. 
\subsection{Qubit-efficient preparation of dissipative Ising ground state}
In this section, we detail the resources used in experimental preparation of the dissipative Ising ground state.

\begin{enumerate}
    \item \emph{Qubit count.} In the implementation of a sequential circuit, qubit can be reused. The total qubit count required is only $q+1$, where $q$ is the number of qubits used as the MPS bond space. In this simulation, we use merely two bond qubits, which means as few as three qubits are sufficient for this implementation.
    \item \emph{Gate count.} The ladder circuit (as shown in Fig.~2) at each physical site has depth $\tau$, and the gate count per site $\sim \tau q$. The gate count has a linear dependence on the system size, and the total gate count $\sim \tau q n$. To be precise, for us, $q = 2, \tau = 2$, and $n = 20$, thus the total SU(4) gate count is 80.
    \item \emph{Gate set.} In our tensor-network algorithm, each tenor is assumed to be an SU(4) gate, which can be synthesized with three hardware-natural two-qubit gates and other single-qubit gates~\cite{vatan2004optimal}. 
    \item \emph{Total shots.} We use 2,000 shots for estimating the correlators to reduce shot noise. The energy can be measured from the $X$-correlators and $Z$-correlators.
\end{enumerate}
%Putting everything together, we would like perform a parameter scan of, e.g. 4 different imaginary field strengths; at each parameter, $\sim$10,000 shots of circuit execution is suggested. Each of the 40,000 total shots makes use of merely $3$ qubits and 60 SU(4) gates, from which we could calculate the energy and physical correlators for the ground state of a 20-site non-Hermitian system within a few percent of the true value.
%\end{appendix}
\onecolumngrid
\bibliography{main}

\end{document}